\newtheorem{coro}{Corollary}
\newtheorem{thm}{Theorem}
\newtheorem{propname}[thm]{Proposition}
\theoremstyle{plain} 
\newcommand{\thistheoremname}{}
\newtheorem{genericthm}[thm]{\thistheoremname}
\newcommand{\reNCA}{NetREX }
\newcommand{\reNCAs}{NetREX}
\def\tcr{\textcolor{black}}
\title{NetREX: Network Rewiring using EXpression - Towards Context Specific Regulatory Networks}
\author{Yijie Wang\inst{1}\fnmsep\thanks{Equal contribution} \and Dong-Yeon Cho\inst{1}\fnmsep$^\star$ \and Hangnoh Lee\inst{2} \and Justin Fear\inst{2} \and Brian Oliver\inst{2} \and Teresa M Przytycka\inst{1}}
\institute{National Center of Biotechnology Information, National Library of Medicine, NIH, Bethesda MD 20894, USA \email{przytyck@ncbi.nlm.nih.gov} \and
Laboratory of Cellular and Developmental Biology, National Institute of Diabetes and Digestive and Kidney Diseases, 50 South Drive, Bethesda, MD 20892, USA
}
\date{}
\begin{document}
\maketitle

\begin{abstract}
Understanding gene regulation is a fundamental step towards understanding of how cells  function and respond to environmental cues and perturbations. An important step in this direction is the ability to infer the transcription factor (TF)-gene regulatory network (GRN). However gene regulatory networks are typically constructed disregarding the fact that  regulatory programs are conditioned on tissue type, developmental stage, sex, and other factors. Due to lack of the biological context specificity, these context-agnostic networks may not provide insight for revealing the precise actions of genes for a specific biological system under concern. Collecting multitude of features required for a reliable construction of GRNs such as physical features (TF binding, chromatin accessibility) and functional features (correlation of expression or chromatin patterns) for every context of interest is costly. Therefore we need methods that is able to utilize the knowledge  about  a context-agnostic  network (or a network constructed in a related context) for  construction of a context specific regulatory network.  \\\\

To address this challenge we developed a computational approach that utilizes expression data obtained in a specific biological context such as a particular development stage, sex, tissue type and a GRN constructed in a different but related context (alternatively an incomplete or a noisy network for the same context) to construct a  context specific GRN. Our method, \reNCAs, is inspired by network component analysis (NCA) that estimates TF activities and their influences on target genes given predetermined topology of a TF-gene network. To predict a network under a different condition, \reNCA removes the restriction that the topology of the TF-gene network is fixed and allows for adding and removing edges to that network. To solve the corresponding optimization problem, which is non-convex and non-smooth, we provide a general mathematical framework allowing use of the recently proposed Proximal Alternative Linearized Maximization technique and prove that our formulation has the properties required for convergence. \\\\

We tested our \reNCA on simulated data and subsequently applied it to gene expression data in adult females from 99 hemizygotic lines of the \textit{Drosophila} deletion (DrosDel) panel. The networks predicted by \reNCA showed higher biological consistency than alternative approaches. In addition, we used the list of recently identified targets of the Doublesex (DSX) transcription factor  to demonstrate the predictive power of our method.

\end{abstract}

\newpage
\setcounter{page}{1}

\section{Introduction}

Cell function, fitness, and survival depend on a complex regulatory program  involving interactions between genes and their regulators. Regulatory relationships  between transcription factors (TFs) and genes they regulate constitute a gene regulatory network (GRN) that is often represented as a directed bipartite graph.  Several experimental and computationally derived types of evidences  can be used to infer the topology of such regulatory network including genome-wide chromatin immunoprecipitation (ChIP), gene expression profiling, and motif analysis. Complementing the topology of a GRN, a further level of understanding can be obtained by  modeling the quantitative relation  between TF activities and expression of their target genes. Specifically, given expression data obtained by  engineered  perturbations of a reference state, or by tracing expression changes over a number of naturally occurring conditions, the goal is to model the expression changes as a function of changes in activities of TFs when the underlying GRN topology is available.   In particular,  network component analysis (NCA) has proven to be a powerful method for such modeling~\cite{Misra2013,Arrieta-Ortiz2015,Buescher2012,Bar-Joseph2012}.  The essence of NCA approach is the assumption that the expression of genes can be modeled by linear combination of TF activities~\cite{Liao2003}. TF activity is a hidden parameter of each TF that the method infers from the data. While the assumption of linearity of the effects is obviously a simplifying one, it provides a good first approximation and makes the problem tractable.

An important drawback of the NCA approach is that it requires  the topology of the GRN to be known. 
Several computational methods which integrate diverse functional genomics data sets, were developed to infer GRNs and investigate gene regulation at the systems level ~\cite{Marbach2012,Novershtern2011}.
Yet,  current knowledge of the topology of regulatory networks is not complete, even for simple unicellular organisms such as yeast~\cite{Hughes2013} and for most organisms construction of a regulatory network has not even been attempted. 
In addition these networks are typically context-agnostic, namely, they were constructed without considering tissue type, development stage, and other relevant conditions. 
However, for very closely related organisms their regulatory networks can be assumed to be rather similar due to evolutionary conservation.  Similarly, for any specific organism, the regulatory interaction in different tissues are expected to  overlap significantly. This motivates the need for developing a method that can utilize a network constructed for a closely related organism, stage, or tissue as a starting point for constructing a tissue, stage or organism specific network.  
Indeed, there is an increasing recognition of the importance of tissue specific analyses and tissue specific networks~\cite{Park2015,Greene2015}.



To address this challenge we developed Network Rewiring using EXpression (\reNCAs), a new mathematically rigorous method, that builds on the linear model  utilized by the NCA method, but without the assumption that the  topology of the regulatory network is  fixed. That is, unlike previous methods, \reNCA does not restrict the structure of the regulatory network to be hardwired but instead  utilizes expression data from a set of perturbations performed in a given context  and a prior network that is assumed to be related to the target network by limited changes in the topology to construct a context specific network. We remark that allowing for rewiring in the topology of the prior network adds a whole new level of complexity. Specifically, we use $\ell_0$ norm to directly handle the number of removed and newly added edges as well as induce sparse solutions in our formulation. Unlike the widely used strategy, which is replacing the non-convex $\ell_0$ norm by its convex relaxation $\ell_1$ norm, we focus on the harder problem involving $\ell_0$ norm and provide a number of rigorous derivations and results allowing us to adopt the recently proposed  Proximal Alternative Linearized Maximization (PALM) algorithm~\cite{Bolte2014}. In addition, we also proved the convergence of the \reNCA algorithm.  


To evaluate method's performance we first tested \reNCA on simulated data. Specifically, we analyzed its performance as the function of the noise in the prior network and  in the gene expression data. We found that under the assumptions of the model, \reNCA is able to dramatically improve the accuracy of the regulatory networks as long as the prior network and the gene expression are not very noisy.  After testing the method on simulated data,  we  used \reNCA for constructing regulatory networks for adult female flies. We used the network constructed in~\cite{Marbach2012} as the prior network. This network was build by integrating diverse data sets including TF binding, evolutionarily conserved sequence motifs, gene expression across developmental time-course, and chromatin modification data sets. The topology of the network provides an initial wiring diagram that includes TF-target gene interactions predicted from data available at the time of network construction disregarding the context. Starting with this network, we utilized a new expression data set that we collected for adult female flies where perturbations in expression were achieved  by genetic deletions. Specifically, the gene expression data of adult females are from 99 hemizygotic lines (deletion/+) of the \textit{Drosophila} deletion collection (DrosDel) project~\cite{Ryder2004,Ryder2007}. To evaluate the resulting networks we used a previously applied method~\cite{Marbach2012} to access biological relevance of the networks by using Gene Ontology (GO) annotations~\cite{Blake2015} and physical protein-protein interactions (PPIs). \tcr{We compared \reNCA with several methods including a correlation based algorithm~\cite{Marbach2012a} and GENIE3~\cite{Huynh-Thu2010}, the best performer in the DREAM4 \textit{In Silico Multifactorial challenge}~\cite{Marbach2012a}.} We observed dramatic improvements in terms of fold enrichment comparing to all competing algorithms. Subsequently we asked how well the method predicts targets of the TF that have specific roles in adult flies and those targets would be difficult to identify based on cell lines or embryonic data that were predominately used in the construction of the prior network. For this analysis we focused on the Doublesex transcription factor (DSX) whose predicted targets have been recently elucidated ~\cite{Clough2014}.
We showed that the target genes predicted by \reNCA are in good agreement with the experimentally identified  targets.  

Our method addresses an important challenge in analysis of gene regulation. It can be applied in many diverse setting including construction of  condition specific GRNs and networks for organisms related to a model organism where a preliminary regulatory network exits.  As a spin-off of these studies we also developed mathematical underpinning allowing to adopt Proximal Alternative Linearized Maximization (PALM) algorithm to the context of the $\ell_0$ elastic net.

\section{Mathematical Underpinning of the \reNCA Method}
Before describing   the mathematical foundations behind the \reNCA method, we provide a brief overview of the traditional (static) NCA method and its various implementations. Next we introduce the formula for the objective function in our \reNCA method. Importantly, the objective function is non-convex and non-smooth because of using $\ell_0$ norm in our formulation. Rather than relaxing the problem by replacing the non-convex $\ell_0$ norm with the convex $\ell_1$ norm, we have directly solved the more challenging problem with $\ell_0$ norm by adopting the recently proposed Proximal Alternative Linearized Maximization (PALM) algorithm~\cite{Bolte2014} to the original formulation of the problem.

\begin{figure}[htpb!]
\includegraphics[width=16.5cm,height=4.8cm]{./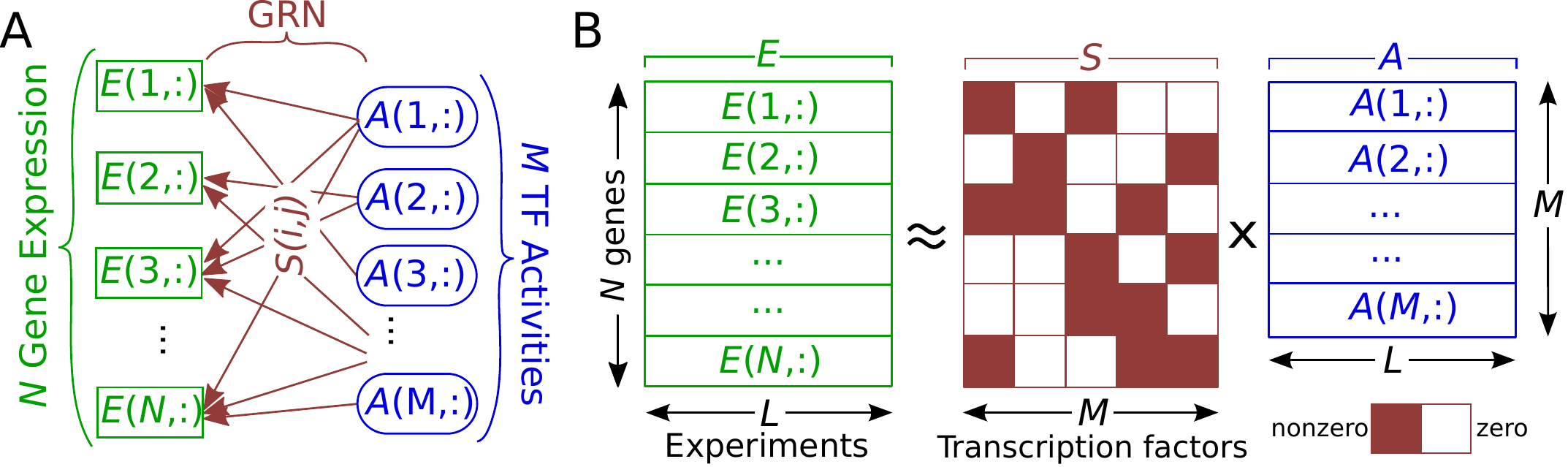}
\caption{\scriptsize The NCA model. (A) The graph representation of NCA. $E(i,:)$ is the expression of gene $i$ over  $L$ experiments and $A(i,:)$ is the activity of TF $i$ over the same $L$ experiments. $S(i,j)$ is the control strength from TF $j$ to gene $i$. (B) The algebraic formulation of NCA.  $E$, $S$ and $A$ in (B) correspond to $E$, $S$ and $A$ in (A).}
\label{fig0}
\end{figure}

\subsection{Network Component Analysis (NCA) and Its Implementations }
The main principle of the NCA is to explain the expression of each gene as a linear combination of activities of  TFs regulating that gene weighted by their control strength that exerts on the gene. In case of NCA, the topology of the bipartite GRN is provided as a part of the input.  Formally, let $E \in \mathbb{R}^{N\times L}$ be the matrix of  expression data of $N$ genes in  $L$ experiments. NCA is a special case of a more general problem which is to express $E$ as
\begin{equation}\label{linear}
E=SA+\Gamma,
\end{equation}
where  $S \in \mathbb{R}^{N\times M}$ is a weighted adjacency matrix of the bipartite GRN 
$\mathcal{G}(\textsc{\textit{TF}}, \textsc{\textit{TG}}, \mathcal{S})$ such that  the  edges of  $\mathcal{G}$ in the edge set $\mathcal{S}$ connect transcription factors in the $M$ element set $\textsc{\textit{TF}}$ to  target genes from the $N$ element set $\textsc{\textit{TG}}$. Specifically, for target gene $i$ and transcription factor $j$,  weight $S(i,j)$ defines the control strength that transcription factor $j$ exerts on gene $i$.  The rows of  $A \in \mathbb{R}^{M\times L}$ represent the (hidden) TF activities of each TF over the set of experiments,  and $\Gamma \in \mathbb{R}^{N\times M}$ represents the noise (Fig.~\ref{fig0}).

Many mathematical techniques, such as principle component analysis (PCA), independent component analysis (ICA), non-negative matrix factorization (NMF)~\cite{Lee2001} and sparse coding (SP)~\cite{Mairal2009}, can be used to determine the decomposition of $E$ specified in~\eqref{linear} (for NMF, $E$ needs to be normalized to a non-negative matrix). However, PCA and NMF~\cite{Wang2013} are unable to find a decomposition of $E$ when $M>L$ (i.e. the number of TFs is larger than the number of experiments). Moreover, PCA and ICA hinge on the assumptions of orthogonality and independence between the signals, which may not hold for TF activities (rows of $A$). In addition, all of them can not utilize the prior knowledge of the GRN $\mathcal{G}$. In contrast, NCA~\cite{Liao2003,Tran2005,Galbraith2006,Chang2008,Jacklin2012,Noor2013} can deal with the situation when $M> L$, make no assumptions on TF activities and is able to take full advantage of the prior knowledge of the GRN $\mathcal{G}$. Specifically, NCA aims to uncover the matrix $A$ describing the hidden regulatory activities of TFs and matrix $S$ describing control strengths of each TF on target genes assuming that the structure $S_0$ (unweighted adjacency matrix) of the underlining GRN $\mathcal{G}_0=(\textsc{\textit{TF}}, \textsc{\textit{TG}}, \mathcal{S}_0)$ is known. That is only the entries of $S$ that correspond to edges in $\mathcal{S}_0$ can be non-zero (formally $ \textup{Support}(S) = \textup{Support}(S_0)$, where $\textup{Support}(S)$ denotes the support of $S$, i.e. the positions of its non-zero entries.). Thus NCA recovers the TF activities $A$ and their control strength $S$ by solving the following optimization problem with only the expression data $E$ and the structure $S_0$ of $\mathcal{G}_0$ as inputs.
\begin{equation}\label{nca}
\begin{aligned}
\min_{S,A}\ &\dfrac{1}{2}\left \| E - SA \right \|_F^2\\
       s.t.\ & \textup{Support}(S) = \textup{Support}(S_0),\\
           \ & \left \| S\right \|_{\infty} \leq a, \ \left \| A\right \|_{\infty} \leq b,
\end{aligned}
\end{equation}
where $\left \| S\right \|_{\infty} =\max_{i,j} |S(i,j)|$. The first constraint in the above formulation restricts the structure of the regulatory network $\mathcal{G}$ represented by matrix $S$ to be exactly the same to the input regulatory network $\mathcal{G}_0$. And the rest of the constraints aim to ensure the elements in $A$ and $S$ remain within the domain of biologically sensible values.  

The first method~\cite{Liao2003} to solve problem~\eqref{nca} can provide a unique solution only if the following conditions are met: (i) the matrix $S$ should have full-column rank; (ii) each column of $S$ should have at least $M-1$ zeros; (iii) the matrix $A$ should have full row-rank. Under these conditions, $S$ and $A$ are estimated using an iterative two-step least-squares algorithm~\cite{Liao2003}. Tran et al.~\cite{Tran2005} expanded NCA by allowing the specification of the zero pattern of $A$ as well as $S$. Galbraith et al.~\cite{Galbraith2006} modified the NCA method by revising the third criterion for NCA which cannot be tested before solving the problem. Chang et al.~\cite{Chang2008} treated NCA as an unconstrained optimization problem and employed singular value decomposition (SVD) to find a closed form solution for $S$ without time-consuming iterations. Jacklin et al.~\cite{Jacklin2012} also proposed a non-iterative algorithm for NCA resorting to convex optimization methods. All these methods are vulnerable to the presence of small number of outliners in expression data. To deal with these outliers, Noor et al.~\cite{Noor2013} proposed ROBust Network Component Analysis (ROBNCA) where an additional sparse matrix was used for explicitly modeling the outliers.

\subsection{The Formulation of the Optimization Problem Behind  \reNCA\ }

Independently of numerous variants of the NCA, the assumption that the GRN must be known in advance is a significant drawback of the NCA method. \reNCA relaxes this restriction  under the assumption that  a prior regulatory network that is not too far from the underlining true regulatory network is given. Therefore, it is possible to recover the underlining regulatory network by limited changes to the  prior network. Note that this is a very reasonable assumption in many practical applications where the prior network could come from a related organism or a related tissue or even from the same organism but without sufficient data.  Additionally, to guide the network reconstruction, we assume that genes with highly correlated expression are likely to be regulated by the same TFs. The correlations between genes can be encoded in the gene correlation network $\mathcal{G}^E$ constructed based on gene expression data $E$. Thus in the new optimization problem~\eqref{eql2} we remove the constraint that the structure of the network is fixed ($\textup{Support}(S) = \textup{Support}(S_0))$ but introduce a penalty term that limits the number of added and removed edges with respect to the prior network, along with the terms encouraging consistent treatment of co-expressed genes, and network sparsity. We devote the rest of this subsection to explaining the roles of the added terms.  
\begin{equation}\label{eql2}
\begin{aligned}
\min_{S,A}\ &\dfrac{1}{2}\left \| E - SA \right \|_F^2+\lambda \left (\left \| S_0 \right \|_0-\left \| S\odot S_0 \right \|_0+\left \| S \odot \bar{S}_0 \right \|_0 \right ) + \kappa \textup{tr}(S^TLS) + \eta \left \|S\right \|_0 + \xi \left \|S\right \|_F^2 + \mu \left \| A\right \|_F^2\\
       s.t.\ & \left \| S\right \|_{\infty} \leq a, \ \left \| A\right \|_{\infty} \leq b.
\end{aligned}
\end{equation}
where $\lambda, \kappa, \eta, \xi, \mu $ are the parameters controlling the strength of the corresponding terms. 

The term controlled by $\lambda$ restricts the number of edge changes. Here $\bar{S}_0$ is the adjacency matrix of the complement graph of $\mathcal{G}_0$ and therefore $\bar{S}_0+S_0=\textbf{1}_{N\times M}$ and $\left \|X\right\|_0$ is the $\ell_0$ norm that computes the number of non-zero entries in $X$. $\odot$ is the Hadamard product. We note that $\left \| S_0 \right \|_0-\left \| S\odot S_0 \right \|_0$ denotes the exact number of regulations removed from $\mathcal{G}_0$ and $\left \| S \odot \bar{S}_0\right \|_0$ is the number of regulations added to the prior network $\mathcal{G}_0$. $\lambda$ controls the change in topology of the regulatory network. Larger $\lambda$  indicates that only small number of edges can be added and removed controlling how far our predicted network $\mathcal{G}$ is from the prior network $\mathcal{G}_0$.

The term controlled by $\kappa$ (the graph embedding term~\cite{Belkin2003})  encourages $S(i,k)$ and $S(j,k)$ to have similar control strength if genes $i$ and $j$ are correlated. In Supplementary Materials~\ref{sm1} we provide derivations demonstrating that: 
\begin{equation}\label{graphembed}
\begin{aligned}
&\dfrac{1}{2}\sum_{i,j} \sum_k W(i,j) \left (S(i,k) - S(j,k) \right )^2=\textup{tr}(S^TLS),
\end{aligned}
\end{equation}
where $\textup{tr}()$ is the trace of a matrix and $W$ and $L$ are the adjacency matrix and  the Laplacian matrix of the correlation network $\mathcal{G}^E$, respectively.

The term of equation~\eqref{eql2} that is controlled by  parameter $\eta$ encourages sparsity of the final network (note that $\ell_0$ norm computes the number of non-zero elements). However we note that there might exist correlations between TF activities (rows of $A$), which would imply relations between TFs and enforcing the  sparsity might weaken them. This means that, for a gene, only one TF can be selected from a group of TFs whose activities are highly correlated even though all TFs in the group regulate the gene. Therefore, we have an additional term (controlled by parameter $\xi$) using Frobenius norm to encourage that all regulating TFs have non-zero values in $S$.  For the reader familiar with the elastic net model, we point that $\eta \left \|S\right \|_0 + \xi \left \|S\right \|_F^2$  is analogous  to $\ell_1$ elastic net~\cite{Zou2005}, and we can refer to it as $\ell_0$ elastic net.

Finally, the last term controlled by the variable $\mu$ enforces smoothness of  activities in $A$ by avoiding many elements in $A$ reach to the limit $\{-b, b\}$.

After some linear algebra (Supplementary Materials~\ref{sma2}), we obtain our final formulation as follow. We require $\eta-\lambda\geq 0$, otherwise the above formulation would preserve all regulations in $\mathcal{G}_0$. 
\begin{equation}\label{eqf}
\begin{aligned}
\min_{S,A}\ &\dfrac{1}{2}\left \| E - SA \right \|_F^2+ \left (\eta-\lambda\right )\left \| S\odot S_0 \right \|_0+\left ( \eta+\lambda \right )\left \| S \odot \bar{S}_0 \right \|_0  + \kappa \textup{tr}(S^TLS) + \xi \left \|S\right \|_F^2 + \mu \left \| A\right \|_F^2\\
       s.t.\ & \left \| S\right \|_{\infty} \leq a, \ \left \| A\right \|_{\infty} \leq b.
\end{aligned}
\end{equation}

\subsection{Solving the Optimization Problem Underlying the \reNCA Algorithm }

Our algorithm to solve  optimization problem~\eqref{eqf} relies on the recently proposed proximal alternative linearized maximization (PALM)~\cite{Bolte2014} algorithm. 
The PALM method can solve a general optimization problem formulated as

\begin{equation}\label{obj6}
\begin{aligned}
\min_{}: & \ H(S,A)= F(S,A) + \Phi(S) + \Psi(A) \ \textup{over} \ S\in \Upsilon, A \in \Omega,
\end{aligned}
\end{equation} 
where $F(S,A)$ has to be smooth but $\Phi(S)$ and $\Psi(A)$ do not need to have the convexity and smoothness properties. $\Upsilon$ and $\Omega$ are constraint sets for $S$ and $A$, respectively. The PALM algorithm alternatively applies technique known as proximal forward-backward scheme to both $S$ and $A$. Specifically, at iteration $k$, the proximal forward-backward mappings of $\Phi(S)$ and $\Psi(A)$ on $S\in \Upsilon$ and $A \in \Omega$ for given $S^k$ and $A^k$ are the solutions for the following sub-problems, respectively. 
\begin{subequations}
\begin{align}
&S^{k+1} \in \arg \min_{S\in \Upsilon} \left \{ \big \langle S -S^k, \ \nabla_S F(S^k,A^k) \big \rangle  + \frac{c^k}{2}\left \| S- S^k \right \|_F^2 + \Phi\left ( S \right ) \right \};\label{subeq1111}\\
&A^{k+1} \in \arg \min_{A \in \Omega} \left \{ \big \langle A-A^k, \ \nabla_A F(S^{k+1},A^k) \big \rangle + \frac{d^k}{2}\left \| A- A^k \right \|_F^2 + \Psi\left ( A \right ) \right \} \label{subeq2111},
\end{align}
\end{subequations}
where $\big \langle X, Y \big \rangle = \textup{tr}(X^TY)$, $c^k$ and $d^k$ are positive real numbers and $\nabla_S F(S^k,A^k)$ is the derivative of $F(S,A^k)$ with respect to $S$ at point $S^k$ for fixed $A^k$ and $\nabla_A F(S^{k+1},A^k)$ is the derivative of $F(S^{k+1},A)$ with respect to $A$ at point $A^k$ for fixed $S^{k+1}$. It has been proven that the sequence $\left \{ (S^k, A^k)\right \}_{k\in\mathbb{N}}$ generated by PALM converges to a critical point when it is bounded~\cite{Bolte2014}. 

Casting our optimization problem~\eqref{eqf} into the PALM algorithm framework~\eqref{obj6} introduced above, we have $F(S,A) := \dfrac{1}{2}\left \| E - SA\right \|_F^2+\kappa \textup{tr}(S^TLS)$,  $\Psi(A) :=  \mu \left \| A\right \|_F^2$ and $\Phi(S) := \left (\eta+\lambda\right )\left \| \bar{\mathbb{S}}_0  \odot S\right \|_0 +  \left (\eta - \lambda\right ) \left \| \mathbb{S}_0 \odot  S\right \|_0 + \xi \left \| S\right \|_F^2$. The constraint sets  $\Upsilon$ and $\Omega$ are respectively $\Upsilon = \left \{ S \ | \left \| S\right \|_{\infty}\leq a \right \}$ and $\Omega = \left \{ A \ | \left \| A\right \|_{\infty}\leq b \right \}$. We note that $F(S,A)$, $\Psi(A)$ and $\Phi(S)$ satisfy the requirements of the PALM algorithm. Namely, $F(S,A)$ is smooth, $\Psi(A)$ is convex and smooth but, as allowed in the PALM approach,  $\Phi(S)$ is  non-convex and non-smooth.
Hence, we can apply the PALM algorithm to our problem as long as we can efficiently solve the proximal forward-backward mappings for our specific $\Phi(S)$ and $\Psi(A)$. Proving that we can actually do it is mathematically  the most challenging component  of the development of the method. Due to technicality of the derivations we leave most of them to the supplement and in what follows we only point to the most critical components of the argument.  

It is easy to confirm that the \reNCA problem~\eqref{eqf} can be solved by alternatively applying the following proximal forward-backward mappings~\eqref{subeq1} and~\eqref{subeq2}, which are derived from~\eqref{subeq1111} and~\eqref{subeq2111} by casting our specific $F(S,A)$, $\Phi(S)$, $\Psi(A)$, $\Upsilon$ and $\Omega$ and some linear algebra (derivations can be found in Supplementary Materials~\ref{smpxd}).:
\begin{subequations}
\begin{align}
&S^{k+1} \in \arg \min_{\left \| S\right \|_{\infty}\leq a} \left \{ \Phi\left ( S \right ) + \frac{c^k}{2}\left \| S- U^k \right \|_F^2 \right \};\label{subeq1}\\
&A^{k+1} \in \arg \min_{\left \| A\right \|_{\infty}\leq b} \left \{ \Psi\left ( A \right ) + \frac{d^k}{2}\left \| A- V^k \right \|_F^2 \right \} \label{subeq2},
\end{align}
\end{subequations}
where 
\begin{equation}\label{dev}
U^k = S^k - \frac{1}{c^k}\nabla_S F(S^k,A^k)  \ \textup{and}\ V^k = A^k - \frac{1}{d^k} \nabla_A F(S^{k+1},A^k).
\end{equation}
The derivatives $\nabla_S F(S^k,A^k)$ and $\nabla_A F(S^{k+1},A^k)$ can be computed by
\begin{equation}\label{dev1}
\nabla_S F(S^k,A^k) = (S^kA^k-E)(A^k)^T+2\kappa LS^k \ \textup{and}\  \nabla_A F(S^{k+1},A^k) = (S^{k+1})^T(S^{k+1}A^k-E),
\end{equation}
which are Lipschitz continuous with $L(A^k)=\left \|A^k(A^k)^T\right \|_F+2\kappa \left \|L\right \|_F$ and $L(S^{k+1})=\left \|(S^{k+1})^TS^{k+1} \right \|_F$ as Lipschitz constants, respectively. As suggested by~\cite{Bolte2014}, we set $c^k= \max\left \{v, \ L(A^k)\right \}, \ v>0$ and $d^k= \left \{v, \ L(S^{k+1})\right \}, \ v>0$ to make sure the formulas in~\eqref{dev} are well defined. 

The closed form solution of the proximal forward-backward mapping~\eqref{subeq1} can be obtained based on Proposition~\ref{lemma2}, the Proximal Mapping of $\ell_0$ Elastic Net Under $\| \|_{\infty}$ Constraint Proposition, and its corollary (Corollary~\ref{pro01}). The proposition and the corollary and their proofs can be found in the Supplementary Materials~\ref{sm1b1} and~\ref{sm2}. We emphasize that Proposition~\ref{lemma2} provides the closed form solution for the proximal mapping of $\ell_0$ elastic net under $\| \|_{\infty}$ constraint and thus it has broader applications to diverse feature selection approaches~\cite{Aben2016,Das2015}.

With the help of Proposition~\ref{lemma2} and Corollary~\ref{pro01}, \eqref{subeq1} can be efficiently computed by~\eqref{px1}. And~\eqref{subeq2} can be computed by~\eqref{px2}.     
\begin{subequations}
\begin{align}
&S^{k+1} \in \textup{prox}_{\left\|\cdot\right\|_{\infty}\leq a}\left (U^k,\frac{2(\eta+\lambda)}{c^k},  \frac{2(\eta-\lambda)}{c^k}, \frac{2\xi}{c^k}\right );\label{px1}\\
&A^{k+1} = \mathbb{P}_{\left \| \cdot \right \|_{\infty}\leq b}\left ( \dfrac{1}{1+\frac{2\mu}{d^k}}V^k \right ),\label{px2}
\end{align}
\end{subequations}
where the definitions of $\textup{prox}_{\left\|\cdot\right\|_{\infty}\leq a}(\cdot)$ and $\mathbb{P}_{\left \| \cdot \right \|_{\infty}\leq b}(\cdot)$ can be found in Corollary~\ref{pro01} and Proposition~\ref{lemma2}, respectively. The derivations of~\eqref{px1} and~\eqref{px2} can be found in the Supplementary Materials~\ref{sm3-1}.

We now have all the ingredients for our \reNCA algorithm. Hence, we describe the \reNCA algorithm in Algorithm~\ref{alg1} in Supplementary Materials~\ref{sm2c}.  We note that the constraints for both $S$ and $A$ ($\left \| S\right \|_{\infty}\leq a$ and $\left \| A\right \|_{\infty}\leq b$) make sure that the sequence $\left \{ (S^k, A^k)\right \}_{k\in\mathbb{N}}$ is bounded.  Thus we state that the sequence produced by the \reNCA algorithm converges to a critical point of the optimization problem~\eqref{eqf}, which is described in Proposition~\ref{converg} in Supplementary Materials~\ref{sm3}.

\begin{figure}[htpb]
\includegraphics[width=16.5cm,height=5.8cm]{./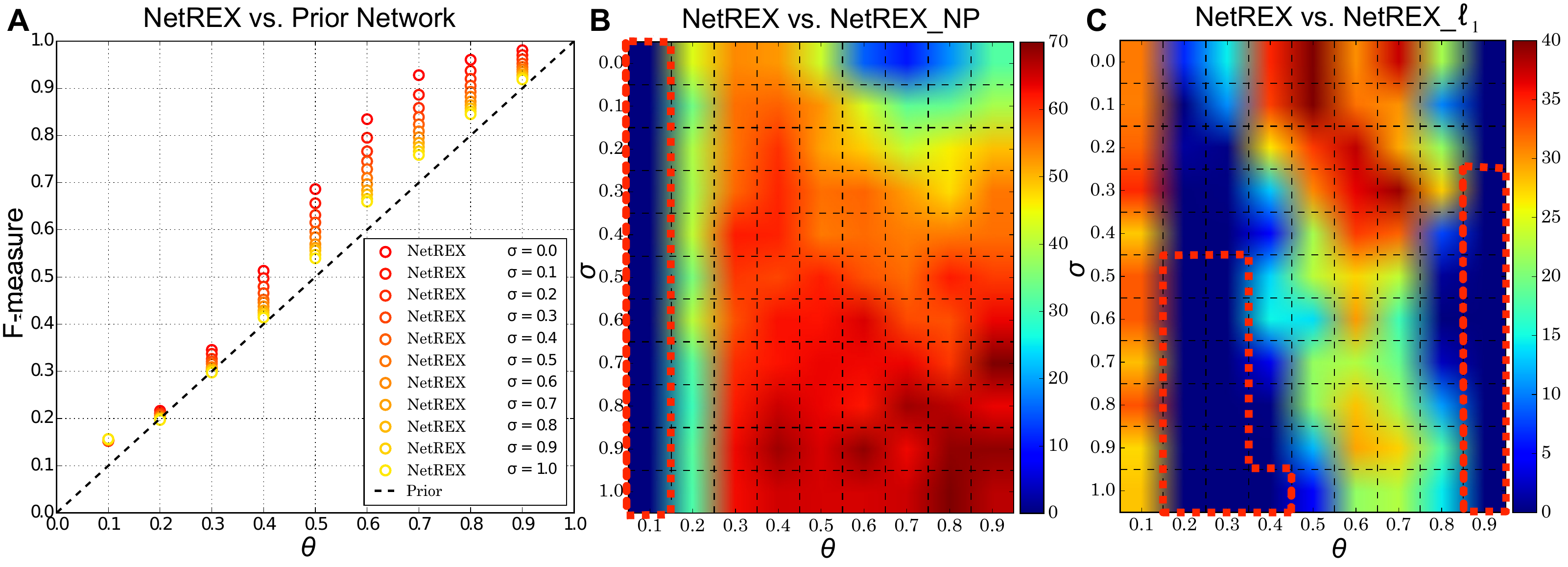}
\caption{\scriptsize (A) Comparison between F-measures of the networks predicted by \reNCA and   prior networks. The x-coordinate denotes percentages of true edges in  prior networks and the black dashed line denotes the F-measures of the prior networks. The circles are the average F-measures of the networks predicted by \reNCA under different $\sigma$ and $\theta$ over 50 random inputs. (B) Comparison between F-measures of the networks predicted by \reNCA and \reNCAs\_NP. The color in each dashed block indicates the $-\log{\textup{p-value}}$ for corresponding $(\sigma, \theta)$, where the p-values are obtained from one-sided paired t-test between F-measures of the compared algorithms. The warmer the color is, the significantly larger F-measures of the networks predicted by \reNCA are than \reNCAs\_NP. The red dashed line circles the $(\sigma, \theta)$ pairs where \reNCAs\_NP achieves larger F-measure at significant level 0.01. (C) Comparison between F-measures of the networks predicted by \reNCA and \reNCAs\_$\ell_1$. The color coding is the same as in the panel (B).} 
\label{fig1}
\end{figure}

\section{Validation and Experimental Results}

\subsection{Results on Simulated Data}
To validate our approach, we applied \reNCA to the simulated data generated based on linear model~\eqref{linear}. We first randomly generated the ground truth adjacency matrix $S$ of the regulatory network $\mathcal{G}(\textsc{\textit{TF}}, \textsc{\textit{TG}}, \mathcal{S})$ and TF activities $A$. Then, the simulated expression data was generated as following 
\begin{equation}\label{smd}
E(i,j) = \sum_p S(i,p)A(p,j) + \Gamma(i,j),
\end{equation}
where $\sum_p S(i,p)A(p,j)$ is the noiseless data arising from known $A$ and $S$ matrices and the noise $\Gamma(i,j)\sim\textbf{\textup{N}}(0, \sigma^2)$ obeys a normal distribution with 0 mean and $\sigma^2$ variance. We assigned the prior network $\mathcal{G}_0$ the same number of edges as the ground truth network $\mathcal{G}$ has but only $\theta$ percentage of edges in $\mathcal{G}_0$ are true edges. We can tune the difficulty of the network rewiring task by using different $\sigma$ and $\theta$.

We compared \reNCA with its two natural variants  on the simulation data. The first variant is \reNCAs\_NP (\reNCA with No edge Perturbation term) that has the same formulation as \reNCA but with $\lambda=0$. The difference between \reNCA and \reNCAs\_NP is that \reNCA penalizes the number of edges added and removed from the prior network but \reNCA\_NP does not. Here we want to mention that \reNCA\_NP and sparse coding have similar formulations (Supplementary Materials~\ref{smc1}). The other related algorithm in our comparison is  \reNCAs\_$\ell_1$, which estimates the $\ell_0$ norm in \reNCA using $\ell_1$ norm. We note, that substituting $\ell_0$ norm by $\ell_1$ norm makes the sub-problems convex and thus easier to solve.  The implementation of these two algorithms are introduced in Supplementary Materials~\ref{smc1}.

We evaluated the performance of the compared algorithms in terms of F-measure (defined in Supplementary Materials~\ref{smb1}). To avoid the effect of parameter selection, for each algorithm, under certain noise level $(\sigma, \theta)$, we first found its optimal parameters in terms of F-measure on one simulated data set through grid search. Then we ran the algorithm on another 50 randomly generated simulated data sets under the same $(\sigma, \theta)$ using its optimal parameters. We can further test whether one method is statistically better than another method under a specific noise level by computing the p-value from one-side paired t-test between their 50 paired F-measures.

The comparisons between \reNCA and others are shown in Fig.~\ref{fig1}. Fig.~\ref{fig1}A shows the comparison between networks predicted by \reNCA and the prior networks, in which we found a tendency that when the expression data is less noisy ($\sigma$ is small) and the prior network is closer to the ground truth ($\theta$ is large), the network predicted by \reNCA achieves higher F-measures than the prior networks. Additionally, we note that \reNCA exhibits, by a larger margin, higher F-measure comparing to the prior networks after $\theta \geq 0.3$. However, for $\theta < 0.3$ the networks predicted by \reNCA only marginally better than the prior network, which also implies that if we use random networks that do not have much overlap with the ground truth as the prior networks, we can not obtain promising results. The comparison between \reNCA and \reNCAs\_NP is displayed in Fig.~\ref{fig1}B. We note that \reNCA significantly outperforms \reNCAs\_NP after $\theta>0.1$. In Fig.~\ref{fig1}C, we observe that \reNCAs\_$\ell_1$ performs better in certain cases where the noise in the expression data is large ($\sigma$ is large) because $\ell_1$ norm is robust to noise. However, for most of the noise levels, \reNCA achieved significantly higher F-measures comparing to \reNCAs\_$\ell_1$ demonstrating that  $\ell_0$ norm is superior to $\ell_1$ norm on selecting sparse contributing components.

\vspace{-0.1cm}
\subsection{Results on Real Experimental Data from DrosDel Study}\vspace{-0.1cm}
Next we applied \reNCA to gene expression data in the adult female flies from 99 hemizygotic lines (deletion/+) of the \textit{Drosophila} deletion collection (DrosDel) project covering 68\% of chromosome 2L. Specifically, in each of the DrosDel lines, a different chromosomal fragment has been deleted leaving the organism with only one copy of genes for the deleted region~\cite{Ryder2007}. We used the network constructed in~\cite{Marbach2012} as the prior network, which is constructed through integrating diverse functional genomics data sets (such as transcription factor (TF) binding, evolutionary conserved sequence motifs and etc.) in a supervised learning framework. The data used in~\cite{Marbach2012} typically comes from cell lines and expression profiles of developmental stages. \reNCA predicted regulatory networks for the adult female flies. And we verified these networks using GO functional annotations~\cite{Blake2015}, physical protein-protein interactions (PPIs) and \textit{Drosophila} Doublesex transcription factor (DSX) target genes~\cite{Clough2014}.

We compared our predicted networks with the prior networks and the TF-Gene correlation networks that were built based on the Pearson coefficient between expression of TFs and genes using expression measurements in DrosDel data~\cite{Lee2016}.  \tcr{Furthermore, we compared with GENIE3~\cite{Huynh-Thu2010}, the best performer in the DREAM4 \textit{In Silico Multifactorial challenge}~\cite{Marbach2012a}, which predicts GRNs using only expression data. We also tried to compare with NMF based algorithms, which can not be applied because the dimensional requirement of NMF~\cite{Wang2013} is not satisfied (i.e. the number of TFs in the target GRN is larger than the number of hemizygotic lines in DrosDel data).}  To demonstrate performance of \reNCA under different parameters, and to choose the parameters in a manner that does not depend on tested data (GO annotations and PPIs), we developed a simple heuristic that ranks the models performance using a quality score based on the objective function~\eqref{eqf} (Supplementary Materials~\ref{smc3}). We then used top twenty models with respect to this ranking.  \tcr{In addition, since performance of the correlation based algorithm and GENIE3 might depend on different cut-offs, in the evaluation we showed the performance of the TF-Gene correlation networks and networks predicted by GENIE3 with different cutoffs (Supplementary Materials~\ref{smc3}).} Finally, we note that, unlike other networks in this comparison, the co-expression network is not a regulatory network, however, we embedded the information about this network in our objective function, thus we  need to show that the good performance of our method is not merely reflection of embedding this information in the objective function. All details of parameter setting used in the comparison are elaborated in the Supplementary Materials~\ref{smc3}.     

\vspace{-0.3cm}
\begin{figure}[htpb]
\centering 
\includegraphics[width=16.5cm]{./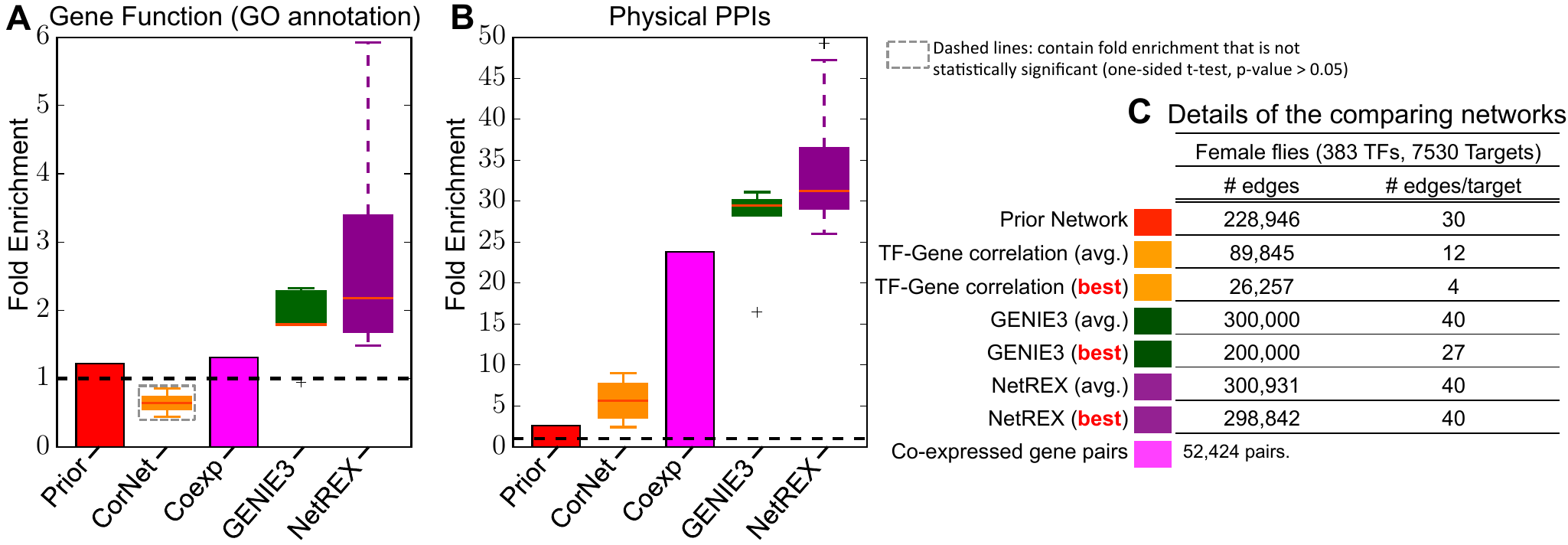}
\caption{\scriptsize Functional enrichment and the details of the compared networks. \tcr{For the correlation based algorithm, GENIE3 and \reNCAs, we showed the performance of these three algorithms under different parameters (details in Supplementary Materials~\ref{smc3}).}  (A) is the fold enrichment of GO annotations between co-regulated genes. (B) is the fold enrichment of physical interactions between co-regulated genes. (C) is the details of all compared networks. The best networks of each method are obtained considering the fold enrichment of both GO annotation and PPIs (Supplementary Materials~\ref{smc4}). The networks predicted by \reNCA show stronger enrichment for both gene functions and physical PPIs. }
\label{flydata}
\end{figure}
\vspace{-0.2cm}

\subsubsection{Functional Enrichment of the Predicted Regulatory Networks}
We assessed the biological relevance of the predicted regulatory networks through checking whether genes co-regulated by similar TFs exhibit similar functional properties. We used the measures (a brief review is in Supplementary Materials~\ref{smb2} and~\ref{smb3}) proposed in~\cite{Marbach2012} to evaluate the enrichment of co-regulated genes in terms of GO functional annotations~\cite{Blake2015} and experimentally derived physical Protein Protein interactions (PPIs) extracted from DroID database~\cite{Yu2008}, respectively. \tcr{In addition to the prior networks, the TF-Gene correlation networks and the networks predicted by GENIE3, we also computed the same measures for the co-expression network  inputted as the graph embedding term in \reNCAs.}

First, we examined whether co-regulated genes have similar GO annotations. The comparison results are illustrated in Fig.~\ref{flydata}A.  
\reNCA clearly outperforms all other approaches demonstrating  benefit of using both the prior network and the condition specific gene expression data.
       
Next, we evaluated whether physically interacting genes are tend to be co-regulated in the respective networks. Fig.~\ref{flydata}B shows the PPIs enrichment comparison. Using this enrichment as a measure of network quality as it has been proposed in~\cite{Marbach2012} we observed that \reNCA also outperforms all other methods. 

\begin{figure}[htpb]
\centering 
\includegraphics[width=16.5cm]{./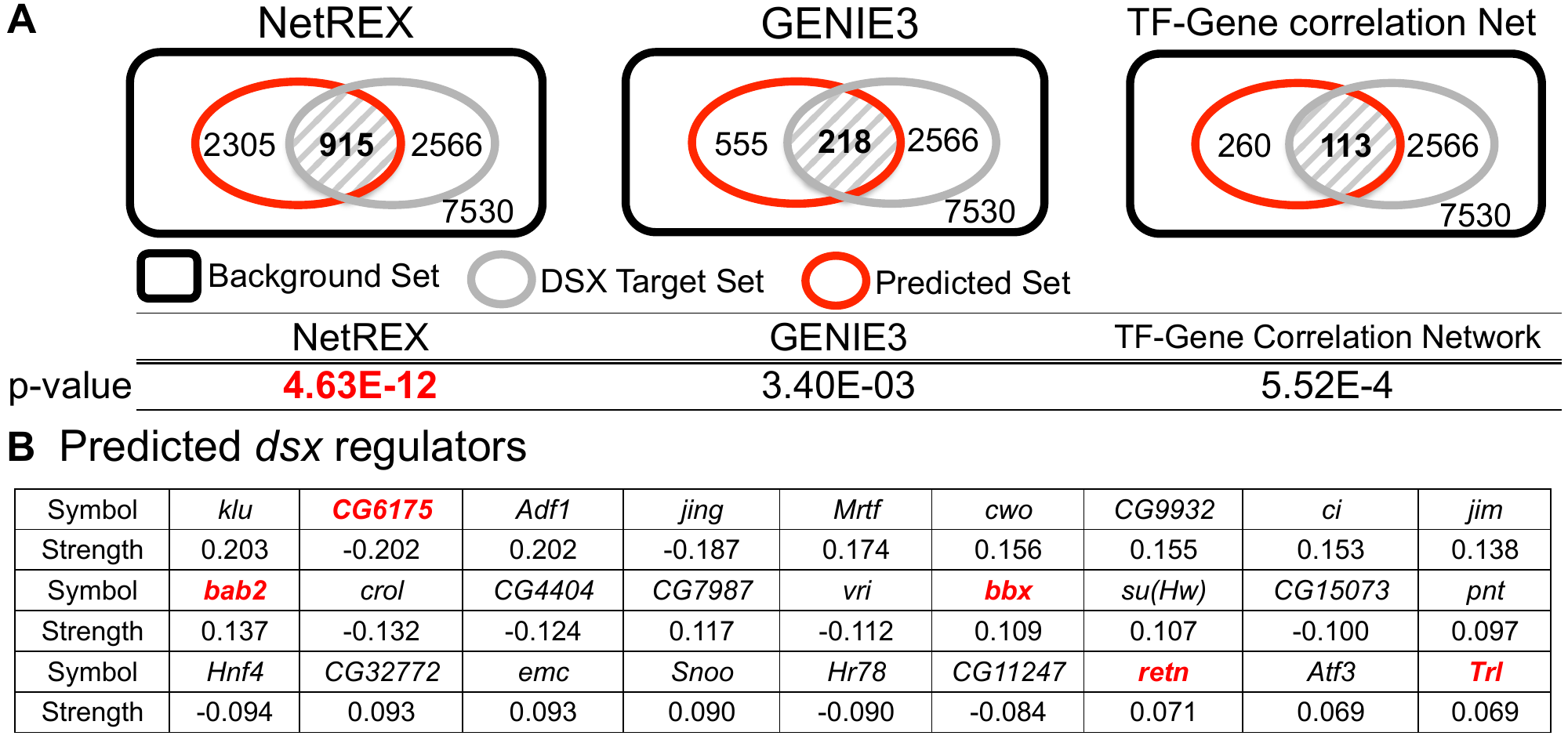}
\caption{\scriptsize (A) Comparison of agreement with DSX Targets using Venn diagrams with p-values.  \tcr{P-values are computed by the hyper geometric test. \reNCA achieved the lowest p-value. In the prior network, there are 2 target genes of DSX but none of them overlapped with the target set reported in~\cite{Clough2014}.}  (B) \textit{dsx} regulators predicted by \reNCA for female flies. We ranked the regulators based on the absolute values of their control strength. Genes in red have evidence in literature. }
\label{agreement}
\end{figure}

\subsubsection{Agreement with DSX Targets}
\tcr{To further validate the predictions obtained from different methods}, we concentrated on target genes of \textit{Drosophila} Doublesex transcription factor (DSX) which involves in the sex determination system as different isoforms in flies~\cite{Baker1989}. Recently, Clough et al.~\cite{Clough2014} reported a rich set of DSX targets based on a series of genome-wide experiments and analysis. Thus, we checked how well the predicted DSX targets of \reNCA are in agreement with genes identified in~\cite{Clough2014}. For each comparing method, we selected the network giving the best fold enrichment in both GO annotation and physical PPIs (Supplementary Materials~\ref{smc4}).  As shown in Fig.~\ref{agreement}A, \reNCA outperforms other methods and there is statistically significant overlap between targets predicted by \reNCA and targets inferred from experimental data (p-value are computed by the hyper geometric test).

\subsubsection{Prediction and Validation of DSX Regulators} 
We then used \reNCA to predict the regulators of \textit{doublesex} gene (\textit{dsx}).  Fig.~\ref{agreement}B illustrates the results of our prediction for \textit{dsx} in female flies.  Our predictions include multiple probable transcription factors that for example, \textit{Trl} (\textit{Trithorax-like}) gene encodes \textit{Drosophila} GAGA factor that has reported roles in sex chromosome dosage compensation~\cite{Greenberg2004}.   In addition, the \textit{retn} locus is required to repress male courtship behaviors in females~\cite{Ditch2005} while \textit{bab1} and \textit{bab2} genes have overlapping functions controlling sex-specific morphology~\cite{Couderc2002}.  We also observed many other predicted genes in the list have sex-specific expression, as like \textit{bbx} (\textit{boby sox}) and \textit{CG6175}, which demonstrate testis-specifically repressed, but ovary-expressed patterns~\cite{Robinson2013}.

\vspace{-0.1cm}
\section{Conclusion}\vspace{-0.1cm}
Regulatory networks embed key information needed for modeling and interpreting  experimental data.  Currently, regulatory networks are constructed by combining information from various tissues, stages, and conditions~\cite{Marbach2012,Turkarslan2015,Hecker2015} to obtain a context-agnostic network. However the importance of constructing tissue / stage specific  networks is now being increasingly recognized~\cite{Park2015,Greene2015}.  And constructing such network from scratch for every relevant tissue and/or condition is not realistic. In addition, data obtained from different tissues and conditions might, provide additional information that context specific analysis might not be sufficiently empowered to detect. For these reasons it is fundamental to be able to utilize regulatory networks constructed in context independent way as a starting point for context specific network construction. The \reNCA method proposed in this paper fulfills this critical need. Importantly, our construction is mathematically rigorous.   We proved convergence of the method and  validate its performance on both simulation data and real world data. The experimental results demonstrate that \reNCA is capable to recover the biological meaningful condition specific TF-gene regulatory networks.  

\section{Acknowledgments}
This work was supported by the Intramural Research Program of the National Institutes of Health, National Library of Medicine (YW, TMP, DYC) and  National Institute of Diabetes and Digestive and Kidney Diseases National (HK, BO). The authors thanks Roded Sharan, and the members of Przytycka and Oliver groups for helpful discussions.  


\bibliographystyle{unsrt}
\bibliography{library}

\begin{thebibliography}{10}

\bibitem{Misra2013}
Ashish Misra and Ganesh Sriram.
\newblock {Network component analysis provides quantitative insights on an
  Arabidopsis transcription factor-gene regulatory network.}
\newblock {\em BMC Systems Biology}, 7(1):126, 2013.

\bibitem{Arrieta-Ortiz2015}
Mario~L. Arrieta-Ortiz, Christoph Hafemeister, Ashley~Rose Bate, Timothy Chu,
  Alex Greenfield, Bentley Shuster, Samantha~N. Barry, Matthew Gallitto, Brian
  Liu, Thadeous Kacmarczyk, Francis Santoriello, Jie Chen, Christopher D~a
  Rodrigues, Tsutomu Sato, David~Z. Rudner, Adam Driks, Richard Bonneau, and
  Patrick Eichenberger.
\newblock {An experimentally supported model of the Bacillus subtilis global
  transcriptional regulatory network.}
\newblock {\em Molecular systems biology}, 11(11):839, 2015.

\bibitem{Buescher2012}
Joerg~Martin Buescher, Wolfram Liebermeister, Matthieu Jules, Markus Uhr, Jan
  Muntel, Eric Botella, Bernd Hessling, Roelco~Jacobus Kleijn, Ludovic {Le
  Chat}, Fran\c{c}ois Lecointe, Ulrike M\"{a}der, Pierre Nicolas, Sjouke
  Piersma, Frank R\"{u}gheimer, D\"{o}rte Becher, Philippe Bessieres, Elena
  Bidnenko, Emma~L Denham, Etienne Dervyn, Kevin~M Devine, Geoff Doherty,
  Samuel Drulhe, Liza Felicori, Mark~J Fogg, Anne Goelzer, Annette Hansen,
  Colin~R Harwood, Michael Hecker, Sebastian Hubner, Claus Hultschig, Hanne
  Jarmer, Edda Klipp, Aur\'{e}lie Leduc, Peter Lewis, Frank Molina, Philippe
  Noirot, Sabine Peres, Nathalie Pigeonneau, Susanne Pohl, Simon Rasmussen,
  Bernd Rinn, Marc Schaffer, Julian Schnidder, Benno Schwikowski, Jan~Maarten
  {Van Dijl}, Patrick Veiga, Sean Walsh, Anthony~J Wilkinson, J\"{o}rg
  Stelling, St\'{e}phane Aymerich, and Uwe Sauer.
\newblock {Global network reorganization during dynamic adaptations of Bacillus
  subtilis metabolism.}
\newblock {\em Science (New York, NY)}, 335(6072):1099--1103, 2012.

\bibitem{Bar-Joseph2012}
Ziv Bar-Joseph, Anthony Gitter, and Itamar Simon.
\newblock {Studying and modelling dynamic biological processes using
  time-series gene expression data}.
\newblock {\em Nature Reviews Genetics}, 13(8):552--564, 2012.

\bibitem{Liao2003}
James~C Liao, Riccardo Boscolo, Young-Lyeol Yang, Linh~My Tran, Chiara Sabatti,
  and Vwani~P Roychowdhury.
\newblock {Network component analysis: reconstruction of regulatory signals in
  biological systems.}
\newblock {\em Proceedings of the National Academy of Sciences of the United
  States of America}, 100(26):15522--15527, 2003.

\bibitem{Marbach2012}
Daniel Marbach, Sushmita Roy, Ferhat Ay, Patrick~E. Meyer, Rogerio Candeias,
  Tamer Kahveci, Christopher~a. Bristow, and Manolis Kellis.
\newblock {Predictive regulatory models in Drosophila melanogaster by
  integrative inference of transcriptional networks}.
\newblock {\em Genome Research}, 22(7):1334--1349, 2012.

\bibitem{Novershtern2011}
Noa Novershtern, Aviv Regev, and Nir Friedman.
\newblock {Physical Module Networks: An integrative approach for reconstructing
  transcription regulation}.
\newblock {\em Bioinformatics}, 27(13):177--185, 2011.

\bibitem{Hughes2013}
Timothy~R. Hughes and Carl~G. de~Boer.
\newblock {Mapping yeast transcriptional networks}.
\newblock {\em Genetics}, 195(1):9--36, 2013.

\bibitem{Park2015}
Christopher~Y. Park, Arjun Krishnan, Qian Zhu, Aaron~K. Wong, Young~Suk Lee,
  and Olga~G. Troyanskaya.
\newblock {Tissue-aware data integration approach for the inference of pathway
  interactions in metazoan organisms}.
\newblock {\em Bioinformatics}, 31(7):1093--1101, 2015.

\bibitem{Greene2015}
Casey~S Greene, Arjun Krishnan, Aaron~K Wong, Emanuela Ricciotti, Rene~a
  Zelaya, Daniel~S Himmelstein, Ran Zhang, Boris~M Hartmann, Elena Zaslavsky,
  Stuart~C Sealfon, Daniel~I Chasman, Garret~a FitzGerald, Kara Dolinski, Tilo
  Grosser, and Olga~G Troyanskaya.
\newblock {Understanding multicellular function and disease with human
  tissue-specific networks.}
\newblock {\em Nature genetics}, 47(6):569--76, 2015.

\bibitem{Bolte2014}
J\'{e}r\^{o}me Bolte, Shoham Sabach, and Marc Teboulle.
\newblock {Proximal alternating linearized minimization for nonconvex and
  nonsmooth problems}.
\newblock {\em Mathematical Programming}, 146(1-2):459--494, 2014.

\bibitem{Ryder2004}
Edward Ryder, Fiona Blows, Michael Ashburner, Rosa Bautista-Llacer, Darin
  Coulson, Jenny Drummond, Jane Webster, David Gubb, Nicola Gunton, Glynnis
  Johnson, Cahir~J. O'Kane, David Huen, Punita Sharma, Zoltan Asztalos, Heiko
  Baisch, Janet Schulze, Maria Kube, Kathrin Kittlaus, Gunter Reuter, Peter
  Maroy, Janos Szidonya, ??Sa Rasmuson-Lestander, Karin Ekstr??m, Barry
  Dickson, Christoph Hugentobler, Hugo Stocker, Ernst Hafen, Jean~Antoine
  Lepesant, Gert Pflugfelder, Martin Heisenberg, Bernard Mechler, Florenci
  Serras, Montserrat Corominas, Stephan Schneuwly, Thomas Preat, John Roote,
  and Steven Russell.
\newblock {The DrosDel collection: A set of P-element insertions for generating
  custom chromosomal aberrations in Drosophila melanogaster}.
\newblock {\em Genetics}, 167(2):797--813, 2004.

\bibitem{Ryder2007}
Edward Ryder, Michael Ashburner, Rosa Bautista-Llacer, Jenny Drummond, Jane
  Webster, Glynnis Johnson, Terri Morley, Sang~Chan Yuk, Fiona Blows, Darin
  Coulson, Gunter Reuter, Heiko Baisch, Christian Apelt, Andreas Kauk, Thomas
  Rudolph, Maria Kube, Melanie Klimm, Claudia Nickel, Janos Szidonya, Peter
  Mar\'{o}, Margit Pal, \AA~sa Rasmuson-Lestander, Karin Ekstr\"{o}m, Hugo
  Stocker, Christoph Hugentobler, Ernst Hafen, David Gubb, Gert Pflugfelder,
  Christian Dorner, Bernard Mechler, Heide Schenkel, Joachim Marhold, Florenci
  Serras, Montserrat Corominas, Adri\`{a} Punset, John Roote, and Steven
  Russell.
\newblock {The DrosDel deletion collection: A Drosophila genomewide chromosomal
  deficiency resource}.
\newblock {\em Genetics}, 177(1):615--629, 2007.

\bibitem{Blake2015}
J.~a. Blake, K.~R. Christie, M.~E. Dolan, H.~J. Drabkin, D.~P. Hill, L.~Ni,
  D.~Sitnikov, S.~Burgess, T.~Buza, C.~Gresham, F.~McCarthy, L.~Pillai,
  H.~Wang, S.~Carbon, H.~Dietze, S.~E. Lewis, C.~J. Mungall, M.~C.
  Munoz-Torres, M.~Feuermann, P.~Gaudet, S.~Basu, R.~L. Chisholm, R.~J. Dodson,
  P.~Fey, H.~Mi, P.~D. Thomas, a.~Muruganujan, S.~Poudel, J.~C. Hu, S.~a.
  Aleksander, B.~K. McIntosh, D.~P. Renfro, D.~a. Siegele, H.~Attrill, N.~H.
  Brown, S.~Tweedie, J.~Lomax, D.~Osumi-Sutherland, H.~Parkinson, P.~Roncaglia,
  R.~C. Lovering, P.~J. Talmud, S.~E. Humphries, P.~Denny, N.~H. Campbell,
  R.~E. Foulger, M.~C. Chibucos, M.~Gwinn Giglio, H.~Y. Chang, R.~Finn,
  M.~Fraser, a.~Mitchell, G.~Nuka, S.~Pesseat, a.~Sangrador, M.~Scheremetjew,
  S.~Y. Young, R.~Stephan, M.~a. Harris, S.~G. Oliver, K.~Rutherford, V.~Wood,
  J.~Bahler, a.~Lock, P.~J. Kersey, M.~D. McDowall, D.~M. Staines, M.~Dwinell,
  M.~Shimoyama, S.~Laulederkind, G.~T. Hayman, S.~J. Wang, V.~Petri,
  P.~D'Eustachio, L.~Matthews, R.~Balakrishnan, G.~Binkley, J.~M. Cherry, M.~C.
  Costanzo, J.~Demeter, S.~S. Dwight, S.~R. Engel, B.~C. Hitz, D.~O. Inglis,
  P.~Lloyd, S.~R. Miyasato, K.~Paskov, G.~Roe, M.~Simison, R.~S. Nash, M.~S.
  Skrzypek, S.~Weng, E.~D. Wong, T.~Z. Berardini, D.~Li, E.~Huala,
  J.~Argasinska, C.~Arighi, a.~Auchincloss, K.~Axelsen, G.~Argoud-Puy,
  a.~Bateman, B.~Bely, M.~C. Blatter, C.~Bonilla, L.~Bougueleret, E.~Boutet,
  L.~Breuza, a.~Bridge, R.~Britto, C.~Casals, E.~Cibrian-Uhalte, E.~Coudert,
  I.~Cusin, P.~Duek-Roggli, a.~Estreicher, L.~Famiglietti, P.~Gane, P.~Garmiri,
  a.~Gos, N.~Gruaz-Gumowski, E.~Hatton-Ellis, U.~Hinz, C.~Hulo, R.~Huntley,
  F.~Jungo, G.~Keller, K.~Laiho, P.~Lemercier, D.~Lieberherr, a.~Macdougall,
  M.~Magrane, M.~Martin, P.~Masson, P.~Mutowo, C.~O'Donovan, I.~Pedruzzi,
  K.~Pichler, D.~Poggioli, S.~Poux, C.~Rivoire, B.~Roechert, T.~Sawford,
  M.~Schneider, a.~Shypitsyna, a.~Stutz, S.~Sundaram, M.~Tognolli, C.~Wu,
  I.~Xenarios, J.~Chan, R.~Kishore, P.~W. Sternberg, K.~{Van Auken}, H.~M.
  Muller, J.~Done, Y.~Li, D.~Howe, and M.~Westerfeld.
\newblock {Gene ontology consortium: Going forward}.
\newblock {\em Nucleic Acids Research}, 43(D1):D1049--D1056, 2015.

\bibitem{Marbach2012a}
Daniel Marbach, James~C Costello, Robert Kueffner, Nicole~M Vega, Robert~J
  Prill, Diogo~M Camacho, Kyle~R Allison, Manolis Kellis, James~J Collins,
  Gustavo Stolovitzky, and Dream5 Consortium.
\newblock {Wisdom of crowds for robust gene network inference}.
\newblock {\em Nature Methods}, 9(8):796--804, 2012.

\bibitem{Huynh-Thu2010}
V\^{a}n~Anh Huynh-Thu, Alexandre Irrthum, Louis Wehenkel, and Pierre Geurts.
\newblock {Inferring regulatory networks from expression data using tree-based
  methods}.
\newblock {\em PLoS ONE}, 5(9):1--10, 2010.

\bibitem{Clough2014}
Emily Clough, Erin Jimenez, Yoo~Ah Kim, Cale Whitworth, Megan~C. Neville,
  Leonie~U. Hempel, Hania~J. Pavlou, Zhen~Xia Chen, David Sturgill, Ryan~K.
  Dale, Harold~E. Smith, Teresa~M. Przytycka, Stephen~F. Goodwin, Mark
  VanDoren, and Brian Oliver.
\newblock {Sex- and tissue-specific functions of drosophila doublesex
  transcription factor target genes}.
\newblock {\em Developmental Cell}, 31(6):761--773, 2014.

\bibitem{Lee2001}
Dd~Lee and Hs~Seung.
\newblock {Algorithms for non-negative matrix factorization}.
\newblock {\em Advances in neural information processing systems},
  (1):556--562, 2001.

\bibitem{Mairal2009}
Julien Mairal, Francis Bach, Jean Ponce, and Guillermo Sapiro.
\newblock {Online dictionary learning for sparse coding}.
\newblock {\em Proceedings of the 26th International Conference on Machine
  Learning}, pages 1--8, 2009.

\bibitem{Wang2013}
Yu-Xiong Wang and Yu-Jin Zhang.
\newblock {Nonnegative Matrix Factorization: A Comprehensive Review}.
\newblock {\em IEEE Transactions on Knowledge and Data Engineering},
  25(6):1336--1353, 2013.

\bibitem{Tran2005}
Linh~M. Tran, Mark~P. Brynildsen, Katy~C. Kao, Jason~K. Suen, and James~C.
  Liao.
\newblock {gNCA: A framework for determining transcription factor activity
  based on transcriptome: Identifiability and numerical implementation}.
\newblock {\em Metabolic Engineering}, 7(2):128--141, 2005.

\bibitem{Galbraith2006}
Simon~J. Galbraith, Linh~M. Tran, and James~C. Liao.
\newblock {Transcriptome network component analysis with limited microarray
  data}.
\newblock {\em Bioinformatics}, 22(15):1886--1894, 2006.

\bibitem{Chang2008}
Chunqi Chang, Zhi Ding, Yeung~Sam Hung, and Peter Chin~Wan Fung.
\newblock {Fast network component analysis (FastNCA) for gene regulatory
  network reconstruction from microarray data}.
\newblock {\em Bioinformatics}, 24(11):1349--1358, 2008.

\bibitem{Jacklin2012}
Neil Jacklin, Zhi Ding, Wei Chen, and Chunqi Chang.
\newblock {Noniterative convex optimization methods for network component
  analysis}.
\newblock {\em IEEE/ACM Transactions on Computational Biology and
  Bioinformatics}, 9(5):1472--1481, 2012.

\bibitem{Noor2013}
Amina Noor, Aitzaz Ahmad, Erchin Serpedin, Mohamed Nounou, and Hazem Nounou.
\newblock {ROBNCA: Robust network component analysis for recovering
  transcription factor activities}.
\newblock {\em Proceedings - IEEE International Workshop on Genomic Signal
  Processing and Statistics}, 29(19):19--22, 2013.

\bibitem{Belkin2003}
Mikhail Belkin and Partha Niyogi.
\newblock {Laplacian Eigenmaps for Dimensionality Reduction and Data
  Representation}.
\newblock {\em Neural Computation}, 15(6):1373--1396, 2003.

\bibitem{Zou2005}
H~Zou and T~Hastie.
\newblock {Regularization and variable selection via the elastic-net}.
\newblock {\em Journal of the Royal Statistical Society}, 67:301--320, 2005.

\bibitem{Aben2016}
Nanne Aben, Daniel~J. Vis, Magali Michaut, and Lodewyk~F.a. Wessels.
\newblock {TANDEM: a two-stage approach to maximize interpretability of drug
  response models based on multiple molecular data types}.
\newblock {\em Bioinformatics}, 32(17):i413--i420, 2016.

\bibitem{Das2015}
Jishnu Das, Kaitlyn~M Gayvert, Florentina Bunea, Marten~H Wegkamp, and Haiyuan
  Yu.
\newblock {ENCAPP: elastic-net-based prognosis prediction and biomarker
  discovery for human cancers.}
\newblock {\em BMC genomics}, 16(1):263, 2015.

\bibitem{Lee2016}
Hangnoh Lee, Dong-Yeon Cho, Cale Whitworth, Robert Eisman, Melissa Phelps, John
  Roote, Thomas Kaufman, Kevin Cook, Steven Russell, Teresa Przytycka, and
  Brian Oliver.
\newblock {Effects of Gene Dose, Chromatin, and Network Topology on Expression
  in Drosophila melanogaster}.
\newblock {\em PLOS Genetics}, 12(9):e1006295, 2016.

\bibitem{Yu2008}
Jingkai Yu, Svetlana Pacifico, Guozhen Liu, and Russell~L Finley.
\newblock {DroID: the Drosophila Interactions Database, a comprehensive
  resource for annotated gene and protein interactions.}
\newblock {\em BMC Genomics}, 9:461, 2008.

\bibitem{Baker1989}
B~S Baker.
\newblock {Sex in flies: the splice of life.}
\newblock {\em Nature}, 340(6234):521--524, 1989.

\bibitem{Greenberg2004}
Anthony~J. Greenberg, Judith~L. Yanowitz, and Paul Schedl.
\newblock {The Drosophila GAGA Factor is Required for Dosage Compensation in
  Males and for the Formation of the Male-Specific-Lethal Complex Chromatin
  Entry Site at 12DE}.
\newblock {\em Genetics}, 166(1):279--289, 2004.

\bibitem{Ditch2005}
Lynn~M Ditch, Troy Shirangi, Jeffrey~L Pitman, Kristin~L Latham, Kim~D Finley,
  Philip~T Edeen, Barbara~J Taylor, and Michael McKeown.
\newblock {Drosophila retained/dead ringer is necessary for neuronal
  pathfinding, female receptivity and repression of fruitless independent male
  courtship behaviors.}
\newblock {\em Development (Cambridge, England)}, 132(1):155--164, 2005.

\bibitem{Couderc2002}
Jean-Louis Couderc, Dorothea Godt, Susan Zollman, Jiong Chen, Michelle Li,
  Stanley Tiong, Sarah~E Cramton, Isabelle Sahut-Barnola, and Frank~a Laski.
\newblock {The bric \`{a} brac locus consists of two paralogous genes encoding
  BTB/POZ domain proteins and acts as a homeotic and morphogenetic regulator of
  imaginal development in Drosophila.}
\newblock {\em Development (Cambridge, England)}, 129:2419--2433, 2002.

\bibitem{Robinson2013}
Scott~W. Robinson, Pawel Herzyk, Julian a~T Dow, and David~P. Leader.
\newblock {FlyAtlas: Database of gene expression in the tissues of Drosophila
  melanogaster}.
\newblock {\em Nucleic Acids Research}, 41(D1):744--750, 2013.

\bibitem{Turkarslan2015}
Serdar Turkarslan, Eliza J~R Peterson, Tige~R Rustad, Kyle~J Minch, David~J
  Reiss, Robert Morrison, Shuyi Ma, Nathan~D Price, David~R Sherman, and
  Nitin~S Baliga.
\newblock {A comprehensive map of genome-wide gene regulation in Mycobacterium
  tuberculosis}.
\newblock {\em Scientific data}, 2:150010, 2015.

\bibitem{Hecker2015}
Nikolai Hecker, Mikkel Christensen-Dalsgaard, Stefan~E. Seemann, Jakob~H.
  Havgaard, Peter~F. Stadler, Ivo~L. Hofacker, Henrik Nielsen, and Jan
  Gorodkin.
\newblock {Optimizing RNA structures by sequence extensions using RNAcop}.
\newblock {\em Nucleic Acids Research}, page gkv813, 2015.

\bibitem{Grippo2000}
L.~Grippo and M.~Sciandrone.
\newblock {On the convergence of the block nonlinear Gauss-Seidel method under
  convex constraints}.
\newblock {\em Operations Research Letters}, 26(3):127--136, 2000.

\bibitem{Parikh2013}
Neal Parikh and Stephen Boyd.
\newblock {Proximal Algorithms}.
\newblock {\em Foundations and Trends in Optimization}, 1(3):123--231, 2013.

\end{thebibliography}

\newpage
\begin{center}
\huge{\textbf{Supplementary Materials I:}}\\
\LARGE{\textbf{Mathematical Derivations and Proofs}}
\end{center}

\renewcommand\thesection{\Alph{section}}
\setcounter{section}{0}


\section{Mathematical Derivations}
\subsection{The Graph Embedding}\label{sm1}
The derivation of the graph embedding term~\eqref{graphembed} is shown below.
\begin{equation}\label{eql}
\begin{aligned}
&\dfrac{1}{2}\sum_{i,j} \sum_k W(i,j) \left (S(i,k) - S(j,k) \right )^2\\
&=\dfrac{1}{2}\sum_{i,j} W(i,j)\left \| S(i,:)- S(j,:) \right \|_2^2\\
&=\dfrac{1}{2} \left ( \sum_{i,j} W(i,j)S(i,:)S(i,:)^T +  \sum_{i,j} W(i,j)S(j,:)S(j,:)^T - 2 \sum_{i,j} W(i,j)S(i,:)S(j,:)^T\right  )\\
&=\dfrac{1}{2} \left ( \sum_{i} D(i,i)S(i,:)S(i,:)^T+  \sum_{j} D(j,j)S(j,:)S(j,:)^T - 2 \sum_{i,j} W(i,j)S(i,:)S(j,:)^T\right  )\\
&=\dfrac{1}{2} \left ( 2\textup{tr}(S^TDS) - 2\textup{tr}(S^TWS)\right  )\\
&=\textup{tr}(S^TLS),
\end{aligned}
\end{equation}
where $\textup{tr}()$ is the trace of a matrix. $W$ is the adjacency matrix for the gene co-expression network $\mathcal{G}^E$ and $D$ is a diagonal matrix with the degree of every gene on its diagonal. $L$ is the Laplacian matrix of $\mathcal{G}^E$. 

\subsection{Derivation from \eqref{eql2} to \eqref{eqf}}\label{sma2}
The key of the derivation is the following transformation.
\begin{equation}\label{der1}
\begin{aligned}
&\lambda \left (\left \| S_0 \right \|_0-\left \|  S\odot S_0 \right \|_0+\left \| S \odot \bar{S}_0 \right \|_0 \right ) + \eta \left \|S\right \|_0 \\ 
=&\lambda \left (\left \| S_0 \right \|_0-\left \|  S\odot S_0 \right \|_0+\left \| S \odot \bar{S}_0 \right \|_0 \right ) + \eta \left \|S\odot \textbf{1} \right \|_0 \\
=&\lambda \left (\left \| S_0 \right \|_0-\left \|  S\odot S_0 \right \|_0+\left \| S \odot \bar{S}_0 \right \|_0 \right ) + \eta \left \|S\odot (S_0+\bar{S}_0) \right \|_0 \\
=&\lambda \left (\left \| S_0 \right \|_0-\left \|  S\odot S_0 \right \|_0+\left \| S \odot \bar{S}_0 \right \|_0 \right ) + \eta \left \|S\odot S_0+ S\odot \bar{S}_0 \right \|_0 \\
=&\lambda \left (\left \| S_0 \right \|_0-\left \|  S\odot S_0 \right \|_0+\left \| S \odot \bar{S}_0 \right \|_0 \right ) + \eta \left ( \left \|S\odot S_0\right \|_0 + \left \|S\odot \bar{S}_0 \right \|_0 \right ) \\
=&\lambda \left \| S_0 \right \|_0 + \left (\eta-\lambda\right )\left \| S\odot S_0 \right \|_0+\left ( \eta+\lambda \right )\left \| S \odot \bar{S}_0 \right \|_0. 
\end{aligned}
\end{equation}
In the above derivation, we use the fact that $\textbf{1} = S_0+\bar{S}_0$ and the property that $\left \|S\odot S_0+ S\odot \bar{S}_0 \right \|_0 = \left \|S\odot S_0\right \|_0 + \left \|S\odot \bar{S}_0 \right \|_0$ because $S_0$ and $\bar{S}_0$ are complementary to each other. Replacing the corresponding term in~\eqref{eql2} with the last line of~\eqref{der1}, we have 
\begin{equation}
\begin{aligned}
\min_{S,A}\ &\dfrac{1}{2}\left \| E - SA \right \|_F^2+ \lambda \left \| S_0 \right \|_0  + \left (\eta-\lambda\right )\left \| S\odot S_0 \right \|_0+\left ( \eta+\lambda \right )\left \| S \odot \bar{S}_0 \right \|_0  + \kappa \textup{tr}(S^TLS) + \xi \left \|S\right \|_F^2 + \mu \left \| A\right \|_F^2\\
       s.t.\ & \left \| S\right \|_{\infty} \leq a, \ \left \| A\right \|_{\infty} \leq b.   
\end{aligned}
\end{equation}
Because $\left \| S_0 \right \|_0$ is a constant number, we disregard it and our formulation becomes to~\eqref{eqfb} that is exactly the same to~\eqref{eqf}.
\begin{equation}\label{eqfb}
\begin{aligned}
\min_{S,A}\ &\dfrac{1}{2}\left \| E - SA \right \|_F^2+  \left (\eta-\lambda\right )\left \| S\odot S_0 \right \|_0+\left ( \eta+\lambda \right )\left \| S \odot \bar{S}_0 \right \|_0  + \kappa \textup{tr}(S^TLS) + \xi \left \|S\right \|_F^2 + \mu \left \| A\right \|_F^2\\
       s.t.\ & \left \| S\right \|_{\infty} \leq a, \ \left \| A\right \|_{\infty} \leq b.   
\end{aligned}
\end{equation}

\subsection{Derivations from~\eqref{subeq1111} and~\eqref{subeq2111} to~\eqref{subeq1} and~\eqref{subeq2}}\label{smpxd}
The derivation is based on completing the square technique. We simply show the derivation from~\eqref{subeq1111} to~\eqref{subeq1} and the derivation from~\eqref{subeq2111} to~\eqref{subeq2} is the same.

\begin{equation}
\begin{aligned}
S^{k+1} &\in \arg \min_{S\in \Upsilon} \left \{ \big \langle S -S^k, \ \nabla_S F(S^k,A^k) \big \rangle  + \frac{c^k}{2}\left \| S- S^k \right \|_F^2 + \Phi\left ( S \right ) \right \}\\
& \in \arg \min_{S\in \Upsilon} \left \{ \big \langle S -S^k, \ \nabla_S F(S^k,A^k) \big \rangle  + \frac{c^k}{2}\big \langle S- S^k, S- S^k\big \rangle + \Phi\left ( S \right ) \right \}\\
& \in \arg \min_{S\in \Upsilon} \left \{ \textup{tr}(S^T\nabla_S F(S^k,A^k))  + \frac{c^k}{2}\textup{tr}(S^TS- 2S^TS^k + (S^k)^TS^k) + \Phi\left ( S \right ) \right \}\\
& \in \arg \min_{S\in \Upsilon} \left \{ \Phi\left ( S \right ) + \frac{c^k}{2}\left \| S- (S^k - \frac{1}{c^k}\nabla_S F(S^k,A^k) ) \right \|_F^2 \right \}\\
& \in \arg \min_{\left \| S\right \|_{\infty}\leq a} \left \{ \Phi\left ( S \right ) + \frac{c^k}{2}\left \| S- (S^k - \frac{1}{c^k}\nabla_S F(S^k,A^k) ) \right \|_F^2 \right \}.
\end{aligned}
\end{equation}
The second last line is obtained based on completing the square technique and disregarding a term that is consist of the constants $S^k$ and $\nabla_S F(S^k,A^k)$. The last line is obtained by putting in the specific constraint for $S$.

\subsection{Derivations of the Proximal Mappings used in the \reNCA Algorithm}\label{sm3-1}
The derivation for~\eqref{px1} using Corollary~\ref{pro01} is shown below. 
\begin{equation}
\begin{aligned}
S^{k+1} & \in \arg \min_{\left \| S\right \|_{\infty}\leq a} \left \{ \Phi\left ( S \right ) + \frac{c^k}{2}\left \| S- U^k \right \|_F^2 \right \}\\
&= \arg \min_{\left \| S\right \|_{\infty}\leq a} \left \{ \frac{2(\eta+\lambda)}{c^k}\left \| \bar{\mathbb{S}}_0  \odot S\right \|_0 +  \frac{2(\eta-\lambda)}{c^k} \left \| \mathbb{S}_0 \odot  S\right \|_0 + \frac{2\xi}{c^k} \left \| S\right \|_F^2 + \left \| S- U^k \right \|_F^2 \right \}\\
&= \textup{prox}_{\left\|\cdot\right\|_{\infty}\leq a}\left (U^k,\frac{2(\eta+\lambda)}{c^k},  \frac{2(\eta-\lambda)}{c^k}, \frac{2\xi}{c^k}\right ).
\end{aligned}
\end{equation}
The derivation for~\eqref{px2} is shown below. 
\begin{equation}\label{proxa0}
\begin{aligned}
A^{k+1} &= \arg \min_{\left \| A\right \|_{\infty}\leq b} \left \{ \Psi\left ( A \right ) + \frac{d^k}{2}\left \| A- V^k \right \|_F^2 \right \}\\
             &= \arg \min_{\left \| A\right \|_{\infty}\leq b} \left \{ \frac{2\mu}{d^k} \left \| A\right \|_F^2 + \left \| A- V^k \right \|_F^2 \right \}\\
             &=\mathbb{P}_{\left \| \cdot \right \|_{\infty}\leq b}\left ( \dfrac{1}{1+\frac{2\mu}{d^k}}V^k \right ).
\end{aligned}
\end{equation}

\section{Propositions, Corollary, and Their Proofs}

\subsection{The Proposition for the Proximal Mapping of $\ell_0$ Elastic Net and Its Proof}\label{sm1b1}

\begin{propname}[\textbf{Proximal Mapping of $\ell_0$ Elastic Net Under $\| \|_{\infty}$ Constrain Proposition}]\label{lemma2}
For given $Y\in \mathbb{R}^{m\times n}$, the proximal mapping of $\ell_0$ elastic net under $\| \|_{\infty}$ norm Constrain is
 \begin{equation}
 \begin{aligned}
  \arg \min_{\left \|X\right \|_{\infty}\leq C}\left \{ \left \| Y-X \right \|_F^2 + b \left \|X\right \|_F^2 + c^2\left \| X \right \|_0 \right \}=T_{\frac{c}{\sqrt{b+1}}}\left (\mathbb{P}_{\left \| \cdot \right \|_{\infty} \leq C}\left (\dfrac{Y}{b+1} \right )\right ),
 \end{aligned}
 \end{equation} 
where the projection operator $\mathbb{P}_{\left \| \cdot \right \|_{\infty}\leq C}(\cdot)$ is defined in~\eqref{proj0}
\begin{equation}\label{proj0}
\mathbb{P}_{\left \| \cdot \right \|_{\infty}\leq C}(Y) := \arg \min\left \{ \left \| Y-X \right \|_F^2:  \left \| X \right \|_{\infty}\leq C\right \}= \textup{sign}(Y)\odot \max\left \{|Y|,  C\right \},
\end{equation}
where the $||$, $\textup{sign}()$ and $\max\{\}$ operations are taken component-wise. And the hard-thresholding operator $T_c(\cdot )$ is defined as
 \begin{equation}\label{hts}
 \begin{aligned}
 T_c(Y) := \arg \min_X\left \{ \left \| Y-X \right \|_F^2 + c^2\left \| X \right \|_0 \right \},
 \end{aligned}
 \end{equation} 
where $Y\in \mathbb{R}^{m\times n}$ is any given matrix and $T_c: \mathbb{R}^{m\times n} \rightarrow \mathbb{R}^{m\times n}$ is a component-wise mapping that can be explicitly wrote out
\begin{equation}
\left (T_c(Y)\right )(i,j)=\left\{\begin{matrix}
Y(i,j), & \textup{if} \ \left | Y(i,j) \right |> c;\\ 
\left \{ 0, c \right \}, & \textup{if} \ \left | Y(i,j) \right |= c; \\ 
0, & o.w..
\end{matrix}\right.
\end{equation}
\end{propname}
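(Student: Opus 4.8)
The plan is to exploit full separability of the objective across matrix entries, reduce the resulting scalar problem to the ordinary proximal operator of the $\ell_0$ ``norm'' by completing the square, and then show that the box constraint can be absorbed by a single projection composed with the hard threshold.

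First I would observe that every term of $\|Y-X\|_F^2 + b\|X\|_F^2 + c^2\|X\|_0$, together with the constraint $\|X\|_\infty\le C$, decouples entrywise: writing $x = X(i,j)$ and $y = Y(i,j)$, the problem is equivalent to minimizing, independently for each $(i,j)$,
\[
\phi(x) = (y-x)^2 + b\,x^2 + c^2\|x\|_0 \qquad\text{subject to}\quad |x|\le C ,
\]
where for a scalar $\|x\|_0\in\{0,1\}$ equals $1$ iff $x\neq 0$. So it suffices to identify the minimizer(s) of this scalar problem and reassemble. Next I would complete the square in the smooth part, $(y-x)^2 + b\,x^2 = (b+1)\bigl(x - \tfrac{y}{b+1}\bigr)^2 + \tfrac{b}{b+1}y^2$, the last summand being independent of $x$. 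Dividing through by $b+1$, the scalar problem becomes $\min_{|x|\le C}\bigl\{(x-t)^2 + \gamma^2\|x\|_0\bigr\}$ with $t = \tfrac{y}{b+1}$ and $\gamma = \tfrac{c}{\sqrt{b+1}}$, i.e.\ exactly a box-constrained instance of the proximal map of $\gamma^2\|\cdot\|_0$ at the point $t$; the unconstrained identity $\arg\min_x\{(x-t)^2 + \gamma^2\|x\|_0\} = T_\gamma(t)$ I would recall or re-derive in one line.

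Third---and this is the step I expect to be the crux---I would handle the interaction between the box and the $\ell_0$ penalty. The only candidates for the minimizer are $x=0$ (value $t^2$) and the best feasible nonzero $x$; since $(x-t)^2$ is minimized over $[-C,C]$ precisely at $\hat t := \mathbb{P}_{|\cdot|\le C}(t)$ and $\hat t\neq 0$ whenever $t\neq 0$, the best nonzero candidate has value $(\hat t - t)^2 + \gamma^2$. A short case analysis then finishes the argument: when $|t|\le C$ one has $\hat t = t$, so keeping $\hat t$ beats $0$ iff $|t|>\gamma$, which is literally $T_\gamma(\hat t)$; when $|t|>C$ one computes $t^2 - (\hat t - t)^2 = C(2|t|-C) > C^2$, so in the regime of interest, where the biologically sensible bound satisfies $C\ge\gamma$, the nonzero point $\hat t$ (with $|\hat t|=C\ge\gamma$) is again optimal and again equals $T_\gamma(\hat t)$. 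Hence in every case the scalar minimizer set is $T_\gamma\bigl(\mathbb{P}_{|\cdot|\le C}(t)\bigr)$, the set-valuedness at $|t|=\gamma$ being exactly the ``$|Y(i,j)|=c$'' branch in the definition of $T_c$. Reassembling the entrywise solutions with $t = Y/(b+1)$ and $\gamma = c/\sqrt{b+1}$ gives the stated closed form, and the corollary with two distinct $\ell_0$ weights follows by partitioning the entries according to whether the corresponding entry of $S_0$ is $1$ or $0$ and applying the proposition to each block.

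The routine parts are the separability and the completion of the square; the care is all in the third step. Specifically, one must track ties in the set-valued $\arg\min$ and verify that \emph{projecting onto the box first and hard-thresholding afterwards} (rather than the reverse order) still returns a global minimizer once the target has been clipped---the case $|t|>C$ is precisely where this needs the explicit cost comparison above and the mild hypothesis that the box radius is no smaller than the threshold.
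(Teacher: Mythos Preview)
Your approach is essentially the same as the paper's: both complete the square to reduce to a box-constrained $\ell_0$ prox, and then establish (what the paper isolates as a separate Lemma) that projecting onto the box first and hard-thresholding afterwards solves it. The paper carries out that last step at the matrix level by partitioning indices into $\{(i,j):|U(i,j)|\le C\}$ and $\{(i,j):|U(i,j)|>C\}$ rather than via your scalar case analysis, and your explicit hypothesis $C\ge\gamma$ for the clipped case is exactly the content of what the paper dispatches as ``a simple contradiction argument.''
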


\begin{proof}
 \begin{equation}\label{hts}
 \begin{aligned}
& \arg \min_{\left \|X\right \|_{\infty}\leq C}\left \{ \left \| Y-X \right \|_F^2 + b \left \|X\right \|_F^2 + c^2\left \| X \right \|_0 \right \}\\
                   =& \arg \min_{\left \|X\right \|_{\infty}\leq C}\left \{ (b+1)\left \| \dfrac{Y}{b+1}-X \right \|_F^2  + c^2\left \| X \right \|_0 \right \}\\
                   =& \arg \min_{\left \|X\right \|_{\infty}\leq C}\left \{ \left \| \dfrac{Y}{b+1}-X \right \|_F^2  + (\dfrac{c}{\sqrt{b+1}})^2\left \| X \right \|_0 \right \}\\
                   =&T_{\frac{c}{\sqrt{b+1}}}\left (\mathbb{P}_{\left \| \cdot \right \|_{\infty} \leq C}\left (\dfrac{Y}{b+1} \right )\right ).
 \end{aligned}
 \end{equation} 
 The last equality is derived based on Lemma~\ref{lemma1}.
 \end{proof}

\begin{lemma}\label{lemma1}
Let $U\in \mathbb{R}^{m\times n}$, then
\begin{equation}
\begin{aligned}
\arg \min\left \{\left \| U - X \right \|_F^2 + c^2 \left \| X\right \|_0 : \left \| X\right \|_{\infty} \leq C\right \}=T_c\left (\mathbb{P}_{\left \| \cdot \right \|_{\infty} \leq C}(U) \right ).
\end{aligned}
\end{equation}
\end{lemma}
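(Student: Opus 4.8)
The plan is to reduce the jointly constrained problem to a component-wise one. Since $\left\| U - X\right\|_F^2 = \sum_{i,j} (U(i,j) - X(i,j))^2$, the term $\left\|X\right\|_0 = \sum_{i,j} \mathbf{1}[X(i,j)\neq 0]$, and the constraint $\left\|X\right\|_\infty \leq C$ is equivalent to $|X(i,j)| \leq C$ for every $(i,j)$ independently, the whole objective separates as a sum over entries with independent feasible sets. Hence it suffices to solve the scalar problem $\min_{|x|\leq C}\{(u-x)^2 + c^2\mathbf{1}[x\neq 0]\}$ for a single real $u$, and then show the scalar minimizer equals $T_c(\mathbb{P}_{|\cdot|\leq C}(u)) = T_c(\operatorname{sign}(u)\min\{|u|,C\})$.

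For the scalar problem I would split on whether the minimizer is zero or not. If $x=0$ the objective value is $u^2$. If $x\neq 0$, then on the interval $[-C,C]\setminus\{0\}$ the continuous part $(u-x)^2$ is minimized (ignoring the excluded point, which is handled by the $x=0$ case) at $x = \mathbb{P}_{|\cdot|\leq C}(u) = \operatorname{sign}(u)\min\{|u|,C\}$, giving objective value $(u - \mathbb{P}_{|\cdot|\leq C}(u))^2 + c^2 = \max\{|u|-C,0\}^2 + c^2$. So the comparison is between $u^2$ and $\max\{|u|-C,0\}^2 + c^2$; a short case analysis on $|u|$ versus $C$ shows that choosing the nonzero candidate when $\bigl|\mathbb{P}_{|\cdot|\leq C}(u)\bigr| > c$, choosing $0$ when it is $< c$, and being indifferent when it equals $c$, is exactly the rule defining $T_c$ applied to $\mathbb{P}_{|\cdot|\leq C}(u)$. (One checks, e.g., that when $|u|\leq C$ the clipped value is $u$ itself and the comparison is $u^2$ versus $0 + c^2$, matching the definition of $T_c(u)$; when $|u|>C$ one compares $u^2$ against $(|u|-C)^2 + c^2$, and since $|u| > |u|-C$ whenever $C>0$, a direct subtraction of the two quantities reproduces the threshold $|{\min\{|u|,C\}}| \gtrless c$, i.e. $C \gtrless c$.)

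Assembling the scalar result over all $(i,j)$ gives the matrix identity, since both $\mathbb{P}_{\left\|\cdot\right\|_\infty\leq C}$ and $T_c$ act component-wise and the argmin of a separable objective over a product set is the product of the component-wise argmins. The main obstacle is purely the bookkeeping in the scalar case analysis — in particular handling the boundary case $|u| = C$ and the equality case $\bigl|\mathbb{P}_{|\cdot|\leq C}(u)\bigr| = c$ so that the stated set-valued output $\{0,c\}$ of $T_c$ is correctly produced; there is no deep difficulty, only care needed to confirm that clipping first and hard-thresholding second never loses a minimizer that thresholding-then-clipping would have kept. I would also remark that $T_c$ commutes appropriately with the clip here because $\mathbb{P}_{|\cdot|\leq C}$ only shrinks magnitudes and never moves a value across the threshold $c$ in a way that changes the optimal support, which is what makes the nested composition valid.
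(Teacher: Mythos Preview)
Your separability reduction to a one-dimensional problem is sound and is a more direct route than the paper's proof, which instead partitions the index set into $\mathcal{I}^+=\{(i,j):|U(i,j)|\le C\}$ and $\mathcal{I}^-$, rewrites the Frobenius norm via this split, and then argues that the box constraint can be absorbed so that the problem becomes the unconstrained hard-thresholding of $\mathbb{P}_{\|\cdot\|_\infty\le C}(U)$. Your entrywise argument makes the mechanism transparent at the scalar level and avoids that manipulation.

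However, your scalar case analysis for $|u|>C$ has a real gap. You claim that comparing $u^2$ with $(|u|-C)^2+c^2$ reduces to the threshold ``$C$ versus $c$,'' but the actual difference is
\[
u^2-\bigl((|u|-C)^2+c^2\bigr)=C\bigl(2|u|-C\bigr)-c^2,
\]
which depends on $|u|$, not only on $C$ and $c$. When $C>c$ this is indeed positive for every $|u|>C$ (since $C(2|u|-C)>C^2>c^2$), so the clipped nonzero value wins and agrees with $T_c(\mathbb{P}_{|\cdot|\le C}(u))$. But when $C<c$ the sign can go either way: with $u=10$, $C=1$, $c=2$, the constrained minimizer is $x=1$ with objective value $85$, whereas $T_2\bigl(\mathbb{P}_{|\cdot|\le 1}(10)\bigr)=T_2(1)=0$ gives value $100$. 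Thus the identity in the lemma is actually false without the extra hypothesis $C>c$; the paper's own proof hides the same issue in the step where the hard constraint $X(i,j)=\operatorname{sign}(U(i,j))\,C$ on $\mathcal{I}^-$ is traded for the soft penalty $\|X-\mathbb{P}_{\|\cdot\|_\infty\le C}(U)\|_-^2$. You should add the assumption $C>c$ (in the algorithm's application the box radius $a$ is taken large and the threshold $\sqrt{2(\eta\pm\lambda)/c^k}$ is small, so this holds in practice), after which your scalar analysis goes through cleanly.
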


\begin{proof}
For given $U\in \mathbb{R}^{m\times n}$, let us introduce the following notations
\begin{equation}
\left \| X\right \|^2_{+} = \sum_{(i,j)\in \mathcal{I}^+} X(i,j)^2 \ \textup{and}\ \left \| X\right \|^2_{-} = \sum_{(i,j)\in \mathcal{I}^-} X(i,j)^2,
\end{equation}
where
\begin{equation}
\mathcal{I}^+ = \left \{(i,j) \in \{1, ..., m\} \times \{1, ..., n\}: |U(i,j)| \leq C \right \}
\end{equation}
and
\begin{equation}
\mathcal{I}^- = \left \{(i,j) \in \{1, ..., m\} \times \{1, ..., n\}: |U(i,j)| > C \right \}
\end{equation}
The following observations hold 
\begin{equation}
\begin{aligned}
&(i) \left \| X\right \|_F^2 = \left \|X \right \|_+^2 + \left \|X \right \|_-^2 \\
&(ii) \left \| X-U\right \|_+^2 + \left \| X-C\right \|_-^2 = \left \| X - \mathbb{P}_{\left \| \cdot \right \|_{\infty} \leq C}(U) \right \|_F^2\\
&(iii) \left \| X-C \right \|_-^2 =0 \Leftrightarrow X(i,j)=C \  \forall (i,j)\in \mathcal{I}^-
\end{aligned}
\end{equation}
where the second observation follows from observation $(i)$ and the fact that $\left (\mathbb{P}_{\left \| \cdot \right \|_{\infty} \leq C}(U)\right )(i,j) = U(i,j)$ for $(i,j)\in \mathcal{I}^+$ and $\left (\mathbb{P}_{\left \| \cdot \right \|_{\infty} \leq C}(U)\right )(i,j) = C$ for $(i,j)\in \mathcal{I}^-$.

Based on the above facts, we have that $\bar{X} \in \textup{prox}_{\left\| \cdot \right \|_0}^{\left \| \cdot \right \|_{\infty} \leq C}(U, c)$ if and only if
\begin{subequations}
\begin{align}
\bar{X} &\in \arg \min\left \{\left \| U - X \right \|_F^2 + c^2 \left \| X\right \|_0 : \left \| X\right \|_{\infty} \leq C\right \} \label{c1sub1}\\
            &=\arg \min\left \{\left \| U - X \right \|_+^2 + \left \| U - X \right \|_-^2 + c^2 \left \| X\right \|_0 : \left \| X\right \|_{\infty} \leq C\right \} \label{c1sub2} \\
            &=\arg \min\left \{\left \| U - X \right \|_+^2  + c^2 \left \| X\right \|_0 : X(i,j)=C \  \forall (i,j)\in \mathcal{I}^-, \ \left \| X\right \|_{\infty} \leq C\right \} \label{c1sub3},
\end{align}
\end{subequations}
where the last equality follows that fact that the solution of~\eqref{c1sub3} is also the solutions of~\eqref{c1sub2}, while the converse follows by a simple contradiction argument. Furthermore, one can find that the constraint $ \left \| X\right \|_{\infty} \leq C$ can be removed without affecting the optimal solution of the problem. Therefore, applying observation $(ii)$ and $(iii)$, we obtain
\begin{equation}
\begin{aligned}
\bar{X} &\in \arg \min\left \{\left \| U - X \right \|_+^2  + c^2 \left \| X\right \|_0 : \left \| X-C \right \|_-^2 =0\right \}\\
            &= \arg \min\left \{\left \| U - X \right \|_+^2  +  \left \| X-C \right \|_-^2 + c^2 \left \| X\right \|_0 \right \}\\
            &= \arg \min\left \{\left \| X - \mathbb{P}_{\left \| \cdot \right \|_{\infty} \leq C}(U) \right \|_F^2 + c^2 \left \| X\right \|_0 \right \}=T_c\left (\mathbb{P}_{\left \| \cdot \right \|_{\infty} \leq C}\left (U \right ) \right ),
\end{aligned}
\end{equation}
where the last equality is the definition of $T_c$ in Eq.~\eqref{hts}. 
\end{proof}

\subsection{The Corollary of the Proximal Mapping of $\ell_0$ Elastic Net Proposition and Its Proof}\label{sm2}
\begin{coro}\label{pro01}For given $U\in \mathbb{R}^{m\times n}$, the proximal mapping of $\Phi(S)=\alpha \left \| \bar{S}_0  \odot S\right \|_0 +  \beta \left \| S_0 \odot  S\right \|_0 + \gamma \left \| S\right \|_F^2$ on $\left \| S\right \|_{\infty}\leq C$  is 
\begin{equation}\label{proxs}
\begin{aligned}
\textup{prox}_{\left\|\cdot\right\|_{\infty}\leq C}(U,\alpha, \beta, \gamma)&\in \arg \min_{\left \| S \right \|_{\infty}\leq C}\left \{  \alpha \left \| \bar{S}_0  \odot S\right \|_0 +  \beta \left \| S_0 \odot  S\right \|_0 + \gamma \left \| S\right \|_F^2 + \left \| U-S \right \|_F^2\right \}\\
&=T_{\sqrt{\frac{\beta}{\gamma+1}}}\left (\mathbb{P}_{\left \| \cdot \right \|_{\infty}\leq C}\left ( \frac{\mathbb{U}}{\gamma+1}\right )\right )+T_{\sqrt{\frac{\alpha}{\gamma+1}}}\left (\mathbb{P}_{\left \| \cdot \right \|_{\infty}\leq C}\left (\bar{\frac{\mathbb{U}}{\gamma+1}}\right )\right ),
\end{aligned}
\end{equation}
where $\mathbb{U} = S_0\odot U$ and $\bar{\mathbb{U}} = \bar{S}_0\odot U$. 
\end{coro}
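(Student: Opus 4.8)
The plan is to exploit the fact that $S_0$ and $\bar S_0$ are complementary $0$–$1$ matrices, so that the objective in~\eqref{proxs} splits into two independent pieces, each of which is exactly the proximal objective treated in Proposition~\ref{lemma2}. First I would fix notation: since $S_0+\bar S_0=\mathbf 1_{m\times n}$ with all entries in $\{0,1\}$, the index sets $\mathcal J=\textup{Support}(S_0)$ and $\bar{\mathcal J}=\textup{Support}(\bar S_0)$ partition $\{1,\dots,m\}\times\{1,\dots,n\}$. For any $S$ I write $P=S_0\odot S$ and $Q=\bar S_0\odot S$, so that $S=P+Q$, the supports of $P,Q$ lie in $\mathcal J,\bar{\mathcal J}$ respectively, and $\|S\|_\infty\le C$ is equivalent to the pair $\|P\|_\infty\le C$ and $\|Q\|_\infty\le C$. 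Because $\mathcal J$ and $\bar{\mathcal J}$ are disjoint, the complementarity relations $S_0\odot\bar S_0=0$ and $S_0+\bar S_0=\mathbf 1$ give $\|S\|_F^2=\|P\|_F^2+\|Q\|_F^2$, $\|\bar S_0\odot S\|_0=\|Q\|_0$, $\|S_0\odot S\|_0=\|P\|_0$, and, since $\mathbb U=S_0\odot U$ and $\bar{\mathbb U}=\bar S_0\odot U$ vanish off $\mathcal J$ and $\bar{\mathcal J}$, also $\|U-S\|_F^2=\|\mathbb U-P\|_F^2+\|\bar{\mathbb U}-Q\|_F^2$.

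Substituting these identities, the objective becomes
\[
\Big(\beta\|P\|_0+\gamma\|P\|_F^2+\|\mathbb U-P\|_F^2\Big)+\Big(\alpha\|Q\|_0+\gamma\|Q\|_F^2+\|\bar{\mathbb U}-Q\|_F^2\Big),
\]
and since the two groups of terms involve the disjoint coordinate blocks $P$ and $Q$, the minimization separates: I would minimize each bracket independently over $\|\cdot\|_\infty\le C$. One small technical point must be checked here, namely that enlarging the feasible set of the first bracket from matrices supported in $\mathcal J$ to \emph{all} matrices with $\|\cdot\|_\infty\le C$ is harmless: on any coordinate in $\bar{\mathcal J}$ one has $\mathbb U(i,j)=0$, so the per-coordinate cost $\beta\,[P(i,j)\neq 0]+(\gamma+1)P(i,j)^2$ is uniquely minimized at $P(i,j)=0$; hence the full-size minimizer is automatically supported in $\mathcal J$, and symmetrically for $Q$.

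It then remains to invoke Proposition~\ref{lemma2}. For the first bracket, taking $Y=\mathbb U$, $b=\gamma$, $c^2=\beta$ gives minimizer $T_{\sqrt{\beta/(\gamma+1)}}\!\big(\mathbb P_{\|\cdot\|_\infty\le C}(\mathbb U/(\gamma+1))\big)$; for the second, taking $Y=\bar{\mathbb U}$, $b=\gamma$, $c^2=\alpha$ gives $T_{\sqrt{\alpha/(\gamma+1)}}\!\big(\mathbb P_{\|\cdot\|_\infty\le C}(\bar{\mathbb U}/(\gamma+1))\big)$. Because these two matrices have supports contained in the disjoint sets $\mathcal J$ and $\bar{\mathcal J}$, their sum equals $P+Q=S$, which is precisely the claimed closed form. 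I expect the only real work to be bookkeeping: keeping the complementarity relations straight so that every term of the objective genuinely splits along the partition, and confirming the zero-padding remark so that Proposition~\ref{lemma2} may be applied to the full-size matrices $\mathbb U$ and $\bar{\mathbb U}$ without introducing spurious nonzeros — immediate, since both $T_c(\cdot)$ and $\mathbb P_{\|\cdot\|_\infty\le C}(\cdot)$ fix $0$.
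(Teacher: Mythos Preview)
Your proposal is correct and follows essentially the same route as the paper: decompose $S$ and $U$ along the complementary masks $S_0,\bar S_0$, observe that the objective and constraint split over the two disjoint coordinate blocks, and apply Proposition~\ref{lemma2} to each block. If anything, your treatment is slightly more careful than the paper's, since you explicitly justify why the block minimizers over the full matrix space automatically have the correct support (the paper asserts the split~\eqref{prox1sm} without this remark).
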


\begin{proof}
We know that the $U$ can be decomposed into $U=\textbf{1}\odot U = (S + \bar{S}_0)\odot U=S\odot U + \bar{S}_0 \odot U = \mathbb{U} + \bar{\mathbb{U}}$. Similarly, $S=S\odot S + \bar{S}_0 \odot S = \mathbb{S} + \bar{\mathbb{S}}$. Applying those decomposition into Eq.~\eqref{proxs}, we can decompose the proximal mapping into two parts.
\begin{equation}\label{prox1sm}
\begin{aligned}
&\arg \min_{ \left \| S \right \|_{\infty}\leq C}\left \{  \alpha\left \| \bar{S}_0  \odot S\right \|_0 +  \beta \left \| S_0 \odot  S\right \|_0 + \gamma \left \| S \right \|_F^2 + \left \| U-S \right \|_F^2\right \}\\
&= \arg \min_{\left \| \mathbb{S} \right \|_{\infty}\leq C}\left \{ \beta \left \| \mathbb{S}\right \|_0 + \gamma \left \| \mathbb{S} \right \|_F^2  +  \left \| \mathbb{U}-\mathbb{S} \right \|_F^2 \right \} + \arg \min_{\left \| \bar{\mathbb{S}} \right \|_{\infty}\leq C}\left \{  \alpha \left \| \bar{\mathbb{S}}\right \|_0 + \gamma \left \| \bar{\mathbb{S}} \right \|_F^2 + \left \| \bar{\mathbb{U}}-\bar{\mathbb{S}} \right \|_F^2 \right \} .
\end{aligned}
\end{equation}
Based on Proximal Mapping of $\ell_0$ Elastic Net Proposition~\ref{lemma2}, we know
\begin{equation}\label{prox2}
\begin{aligned}
\arg \min_{\left \| \mathbb{S} \right \|_{\infty}\leq a}\left \{ \beta \left \| \mathbb{S}\right \|_0 + \gamma \left \| \mathbb{S} \right \|_F^2  +  \left \| \mathbb{U}-\mathbb{S} \right \|_F^2 \right \} = T_{\sqrt{\frac{\beta}{\gamma+1}}}\left (\mathbb{P}_{\left \| \cdot \right \|_{\infty}\leq C}\left ( \frac{\mathbb{U}}{\gamma+1}\right )\right ).
\end{aligned}
\end{equation}
Similarly, 
\begin{equation}\label{prox3}
\begin{aligned}
\arg \min_{\left \| \bar{\mathbb{S}} \right \|_{\infty}\leq a}\left \{  \alpha \left \| \bar{\mathbb{S}}\right \|_0 + \gamma \left \| \bar{\mathbb{S}} \right \|_F^2 + \left \| \bar{\mathbb{U}}-\bar{\mathbb{S}} \right \|_F^2 \right \}  = T_{\sqrt{\frac{\alpha}{\gamma+1}}}\left (\mathbb{P}_{\left \| \cdot \right \|_{\infty}\leq C}\left (\bar{\frac{\mathbb{U}}{\gamma+1}}\right )\right ).
\end{aligned}
\end{equation}
Combining Eq.~\eqref{prox2} and Eq.~\eqref{prox3} proves the proposition. 
\begin{equation}\label{prox}
\begin{aligned}
& \arg \min_{\left \| S \right \|_{\infty}\leq C}\left \{  \alpha \left \| \bar{\mathbb{S}}_0  \odot S\right \|_0 +  \beta \left \| \mathbb{S}_0 \odot  S\right \|_0 + \gamma \left \| S\right \|_F^2 + \left \| U-S \right \|_F^2\right \}\\
                            &=T_{\sqrt{\frac{\beta}{\gamma+1}}}\left (\mathbb{P}_{\left \| \cdot \right \|_{\infty}\leq C}\left ( \frac{\mathbb{U}}{\gamma+1}\right )\right )+T_{\sqrt{\frac{\alpha}{\gamma+1}}}\left (\mathbb{P}_{\left \| \cdot \right \|_{\infty}\leq C}\left (\bar{\frac{\mathbb{U}}{\gamma+1}}\right )\right ).
\end{aligned}
\end{equation}

\end{proof}

\subsection{The convergence of \reNCA algorithm}\label{sm3} 
\begin{propname}[\textbf{Convergnece Proposition}]\label{converg}
Let $\left \{ (S^k, A^k)\right \}_{k\in\mathbb{N}}$ be a sequence generated by \reNCA algorithm. Then,\\
\textup{(i)} The sequence $\left \{ (S^k, A^k)\right \}_{k\in\mathbb{N}}$ has finite length, that is
\begin{equation}
\sum_{k=1}^{\infty} \left \| S^{k+1} - S^k \right \|_F + \left \| A^{k+1} - A^k \right \|_F < \infty.
\end{equation}
\textup{(ii)} The sequence $\left \{ (S^k, A^k)\right \}_{k\in\mathbb{N}}$ converges to a critical point $(S^*, A^*)$ of the \reNCA problem. 
\end{propname}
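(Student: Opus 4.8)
The plan is to verify that the \reNCA iteration is an instance of the PALM scheme of Bolte et al.~\cite{Bolte2014} and that all of PALM's hypotheses are met, so that the convergence guarantees of~\cite{Bolte2014} apply verbatim. Recall from the main text that we have written $H(S,A) = F(S,A) + \Phi(S) + \Psi(A)$ with $F(S,A) = \tfrac12\|E-SA\|_F^2 + \kappa\,\textup{tr}(S^TLS)$, $\Psi(A) = \mu\|A\|_F^2$, and $\Phi(S) = (\eta+\lambda)\|\bar{\mathbb S}_0\odot S\|_0 + (\eta-\lambda)\|\mathbb S_0\odot S\|_0 + \xi\|S\|_F^2$, minimized over $\Upsilon = \{\,\|S\|_\infty\le a\,\}$ and $\Omega = \{\,\|A\|_\infty\le b\,\}$. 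First I would fold the indicator functions of $\Upsilon$ and $\Omega$ into $\Phi$ and $\Psi$, i.e.\ replace $\Phi$ by $\Phi + \iota_\Upsilon$ and $\Psi$ by $\Psi + \iota_\Omega$; since $\Upsilon,\Omega$ are nonempty closed sets, the modified $\Phi,\Psi$ are still proper and lower semicontinuous, and the subproblems~\eqref{subeq1} and~\eqref{subeq2} are exactly the proximal maps of these modified functions (whose closed forms are given by~\eqref{px1} and~\eqref{px2} via Proposition~\ref{lemma2} and Corollary~\ref{pro01}). I would then check the three structural requirements of PALM: (a) $F$ is $C^1$ with block‑wise Lipschitz gradients — this is the content of~\eqref{dev1}, where $\nabla_S F(\cdot,A^k)$ has Lipschitz modulus $L(A^k)=\|A^k(A^k)^T\|_F + 2\kappa\|L\|_F$ and $\nabla_A F(S^{k+1},\cdot)$ has modulus $L(S^{k+1}) = \|(S^{k+1})^TS^{k+1}\|_F$, and the step sizes $c^k,d^k$ are chosen above these moduli; (b) $\Phi$ and $\Psi$ (with the indicators) are proper and lower semicontinuous and bounded below (indeed nonnegative); (c) $H$ is bounded below (it is nonnegative).

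The remaining hypothesis of~\cite{Bolte2014} is that $H$ be a \emph{Kurdyka--\L ojasiewicz (KL) function}. The plan here is to observe that $H$ is semialgebraic: $\|E-SA\|_F^2$ and $\textup{tr}(S^TLS)$ and the Frobenius‑norm terms are polynomials, the $\ell_0$ "norms" $\|M\odot S\|_0$ are semialgebraic because each $\{S(i,j)=0\}$ and $\{S(i,j)\ne0\}$ is semialgebraic and a finite sum of indicator‑type counts over such sets defines a semialgebraic function (its graph is a finite union of products of semialgebraic sets), and the indicator functions $\iota_\Upsilon,\iota_\Omega$ of the polyhedra $\Upsilon,\Omega$ are semialgebraic. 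Since finite sums of semialgebraic functions are semialgebraic and semialgebraic functions satisfy the KL property (this is the \L ojasiewicz inequality for the o-minimal structure of semialgebraic sets, as invoked in~\cite{Bolte2014}), $H$ is a KL function. I would also invoke the boundedness of the iterates, already noted in the main text: because every $S^k\in\Upsilon$ and every $A^k\in\Omega$, the sequence $\{(S^k,A^k)\}$ lies in the compact set $\Upsilon\times\Omega$, hence is bounded.

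With all hypotheses of~\cite{Bolte2014} in place — $F$ smooth with block‑Lipschitz gradient, $\Phi,\Psi$ proper l.s.c.\ bounded below, $H$ bounded below and KL, step sizes $c^k,d^k$ chosen appropriately above the Lipschitz moduli, and the generated sequence bounded — Theorem~1 (and its supporting results) of~\cite{Bolte2014} yields directly both conclusions: the sequence has finite length, $\sum_{k}\big(\|S^{k+1}-S^k\|_F + \|A^{k+1}-A^k\|_F\big) < \infty$, which is assertion (i); and consequently $\{(S^k,A^k)\}$ is Cauchy, hence converges to some $(S^*,A^*)$, and the limit is a critical point of $H$ (equivalently $0\in\partial H(S^*,A^*)$, where $\partial$ is the limiting subdifferential), which is assertion (ii). I expect the main obstacle to be the careful justification that $\Phi$ — a $\ell_0$ count masked by $S_0$ and $\bar S_0$, plus an indicator of an $\ell_\infty$ ball — is semialgebraic (and therefore KL), since this is the one ingredient not handled by classical smooth/convex arguments; once semialgebraicity is established, everything else is bookkeeping against the hypotheses of~\cite{Bolte2014}.
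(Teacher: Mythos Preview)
Your proposal is correct and follows exactly the paper's approach: the paper's proof is a single sentence invoking Theorem~3.1 of~\cite{Bolte2014}, and your plan simply spells out in detail the verification of that theorem's hypotheses (block-Lipschitz smoothness of $F$, proper l.s.c.\ and lower-boundedness of $\Phi,\Psi$, semialgebraicity and hence the KL property of $H$, and boundedness of the iterates via the $\|\cdot\|_\infty$ constraints). There is no substantive difference between your argument and the paper's, only a difference in the level of detail.
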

\begin{proof}
We apply Theorem 3.1 in~\cite{Bolte2014} to guarantee that the sequence generated by \reNCA is globally convergent to critical points of~\eqref{eqf}.
\end{proof}

\newpage
\begin{center}
\huge{\textbf{Supplementary Materials II:}}\\
\LARGE{\textbf{Implementation Details}}
\end{center}

\section{The NetREX Algorithm}\label{sm2c}

\subsection{The Details of the NetREX Algorithm}
\begin{algorithm}
    \SetKwInOut{Input}{Input}
    \SetKwInOut{Output}{Output}
    
    \Input{$S_0$, $E$, $L$, $\eta$, $\lambda$, $\kappa$, $\xi$, $v>0$ and $K$;}
    \Output{$S$ and $A$.}
    \Begin{
    $(S^0, A^0)=$Initialization$(S_0)$.\quad \quad \quad \quad \quad \quad \quad \quad \quad \quad \quad \quad \quad \quad \quad \quad \quad \ \ \ // Algorithm~\ref{alg2}. \\ 
    \For{$k=0, 1, 2, ..., K$}{
    $c^k = \max\left \{v, \ L(A^k)\right \}$. \quad \quad \quad \quad \quad \quad \quad \quad \quad \quad \quad \quad \quad \quad \quad \quad \quad \quad \\ 
    $U^k = S^k - \dfrac{1}{c^k}\left ( S^kA^k(A^k)^T + 2\kappa LS^k - E(A^k)^T \right )$. \quad \quad \quad \quad \quad \ \ \ // put~\eqref{dev1} into~\eqref{dev}.\\
    $S^{k+1}\in \textup{prox}_{\left\|\cdot\right\|_{\infty}\leq a}\left (U^k,\frac{2\eta}{c^k},  \frac{2(\eta-\lambda)}{c^k}, \frac{2\xi}{c^k}\right )$. \quad \quad \quad \quad \quad \quad \quad \quad \quad \quad \ \ //as shown in~\eqref{px1}.\\
    $d^k = \left \{v, \ L(S^{k+1})\right \}$. \quad \quad \quad \quad \quad \quad \quad \quad \quad \quad \quad \quad \quad \quad \quad \quad \quad \quad \ \ \ \\ 
    $V^k= A^k - \dfrac{1}{d^k}\left ((S^{k+1})^T(S^{k+1})A^k - (S^{k+1})^TE\right )$. \quad \quad \quad \quad \quad \quad \ \ // put~\eqref{dev1} into~\eqref{dev}.\\
    $A^{k+1} = \mathbb{P}_{\left \| \cdot \right \|_{\infty}\leq b}\left ( \dfrac{1}{1+\frac{2\mu}{d^k}}V^k \right )$. \quad \quad \quad \quad \quad \quad \quad \quad \quad \quad \quad \quad \quad \quad \ //as shown in~\eqref{px2}.
    }
    $S=S^K$ and $A=A^K$
    \caption{The \reNCA algorithm.}\label{alg1}
    }
\end{algorithm}

\subsection{The Initialization Algorithm for NetREX}\label{sm4}

To ensure that the starting point is consistent with the prior network,   $(S^0,A^0)$ have to  be inferred from our prior network $\mathcal{G}_0$. To do this, we compute $(S^0,A^0)$ by solving the following problem, which is obtained from the original \reNCA formulation  by dropping the constraints and disregarding the non-smooth regularization term $ \left (\eta-\lambda\right )\left \| S\odot S_0 \right \|_0+\left ( \eta+\lambda \right )\left \| S \odot \bar{S}_0 \right \|$ of $S$.
\begin{equation}\label{obj5}
\begin{aligned}
\min_{S, A}: & \ J(S,A) =\dfrac{1}{2}\left \| E - SA\right \|_F^2  +\kappa \textup{tr}(S^TLS) + \xi \left \|S\right \|_F^2 + \mu \left \| A\right \|_F^2
\end{aligned}
\end{equation}
The  problem~\eqref{obj5} can be solved by the standard Gauss-Seidel scheme~\cite{Grippo2000} that alternatively solves the multi-variable optimization problem with respect to one variable while fixing the rest of the variables. Specifically, we can fix $S=S^0_k$ and solve \eqref{obj5} with respect to $A$ in closed form shown in Line 4 of Algorithm~\ref{alg2}. Then, we fix $A=A_k^0$ and solve \eqref{obj5} with respect to $S$, whose solution is the solution of the Sylvester equation ${S}A_k^0(A_k^0)^T + 2(\kappa L + \xi I){S} = E(A_k^0)^T$ (derived by setting $\nabla H(S,A_k^0) = 0$). The Sylvester equation is solved by standard Bartels-Stewart algorithm. We alternatively ran lines 4 and 5 $K$ times. In the end, we project the solutions $A_K^0$ and $S_K^0$ into feasible space of Eq.~\eqref{eqf} by the projection operator~\eqref{proj0} shown in lines 7 and 8. Algorithm~\ref{alg2} elaborates the details of obtaining $(S^0,A^0)$. 

\begin{algorithm}
    \SetKwInOut{Func}{Function}
    \SetKwInOut{Input}{Input}
    \SetKwInOut{Output}{Output}

    \Func{Initialization$(S_0)$;}
    \Input{$S_0$;}
    \Output{$S^0$ and $A^0$.}
    \Begin{
    $S_0^0 = S_0$.\\
    \For{$k=0, 1, 2, ..., K$}{
    $A^0_k = \left ((S_k^0)^TS_k^0 + \mu I \right )^{-1} (S_k^0)^TE$.\\
    $S_{k+1}^0 := \left \{ \hat{S} | \hat{S}A_k^0(A_k^0)^T + 2(\kappa L + \xi I)\hat{S} = E(A_k^0)^T \right \}$.\\
    }
    $A^0 = \mathbb{P}_{\left \| \cdot \right \|_{\infty}\leq b}(A^0_K)$.\\
    $S^0 = \mathbb{P}_{\left \| \cdot \right \|_{\infty}\leq a} \left (S^0_K\right )$.
    \caption{The initialization for \reNCA}\label{alg2}
    }
\end{algorithm}

\section{Evaluation Metrics}

\subsection{F-measure}\label{smb1}
F-measure is defined as 
\begin{equation}
\textup{F-measure} = 2\times \dfrac{\textup{Precision} \times \textup{Recall}}{\textup{Precision} + \textup{Recall} },
\end{equation}
where 
\begin{equation}
\textup{Precision} = \dfrac{\left | \mathcal{E}^p \cap \mathcal{E} \right |}{\left | \mathcal{E}^p \right |}, \ \textup{Recall} = \dfrac{\left | \mathcal{E}^p \cap \mathcal{E} \right |}{\left | \mathcal{E} \right |}.
\end{equation}
$\mathcal{E}$ and $\mathcal{E}^p$ are edge sets of the underling regulatory network $\mathcal{G}$ and the predicted regulatory network, respectively. F-measure ranges from 0 to 1, where 1 presents that the underlining $\mathcal{G}$ is fully recovered and 0 means the opposite.

\subsection{Fold Enrichment for GO annotations}\label{smb2}
We consider two genes are co-regulated if the Jaccard similarity coefficient between the TF set regulating the first gene  and the TF set regulating the second gene is larger than 0.5. The Jaccard similarity coefficient between two sets is the ratio of the size of the intersection of the given two sets to the size of the union of these two sets. Then for each co-regulated gene pair, we again use the Jaccard similarity coefficient to measure the similarity between the GO annotation set corresponding to the first gene and the GO annotation set corresponding to the second gene. In the end, we compute the average of this coefficient overall co-regulated gene pairs. The same procedure was done for 100 randomized networks, and the enrichment is the ratio of the average coefficient of the original network to the average of the randomized networks. The randomized networks are generated by permuting the node labels of the original network. Hence, all randomized networks have the same topology to the original network but with different node labels. The statistical significance is accessed at a level of 0.05 using a one-side unpaired T-test for comparing the Jaccard coefficients from the original network with coefficients from 100 randomized networks. 

\subsection{Fold Enrichment for PPIs}\label{smb3}
Enrichment of co-regulated genes for PPIs was computed analogously to enrichment for GO annotations. Specifically, we computed the ratio of the number of PPIs for co-regulated gene pairs to the average number of such PPIs in 100 randomized networks, using the same definition for co-regulation and network randomization.

\section{Parameters}

\subsection{The \reNCAs\_NP and \reNCAs\_$\ell_1$ Algorithms}\label{smc1}
The \reNCAs\_NP algorithm is same to Algorithm~\ref{alg1} with $\lambda=0$. The formulation of \reNCAs\_NP is
\begin{equation}\label{eqfnp}
\begin{aligned}
\min_{S,A}\ &\dfrac{1}{2}\left \| E - SA \right \|_F^2+ \eta\left \| S \right \|_0  + \kappa \textup{tr}(S^TLS) + \xi \left \|S\right \|_F^2 + \mu \left \| A\right \|_F^2\\
       s.t.\ & \left \| S\right \|_{\infty} \leq a, \ \left \| A\right \|_{\infty} \leq b.
\end{aligned}
\end{equation}
The formulation is similar to sparse coding~\cite{Mairal2009} if we remove the graph embedding term. 

\noindent The \reNCAs\_$\ell_1$ formulation is as following.
\begin{equation}\label{eqfl1}
\begin{aligned}
\min_{S,A}\ &\dfrac{1}{2}\left \| E - SA \right \|_F^2+ \left (\eta-\lambda\right )\left \| S\odot S_0 \right \|_1+\left ( \eta+\lambda \right )\left \| S \odot \bar{S}_0 \right \|_1  + \kappa \textup{tr}(S^TLS) + \xi \left \|S\right \|_F^2 + \mu \left \| A\right \|_F^2\\
       s.t.\ & \left \| S\right \|_{\infty} \leq a, \ \left \| A\right \|_{\infty} \leq b.
\end{aligned}
\end{equation}
To do a fair comparison, we also use the PALM algorithm to solve it which is analogous to Algorithm~\ref{alg1}. The only difference is that in line 6 of Algorithm~\ref{alg1}, we use proximal mapping of  $\ell_1$ elastic net that is given in~\cite{Parikh2013} instead of proximal mapping for $\ell_0$ elastic net.  

\subsection{Parameter Settings for Simulated Data}
The parameters used to generate simulated data is $L=60, N=500, M=100$. The density of the ground truth GRN is 0.1. The noise level in simulated expression data $E$ is controlled by $\sigma = \{0,0.1,0.2,0.3,0.4,0.5,0.6,0.7,0.8,0.9,1.0\}$. The percentage of true edges in $\mathcal{G}_0$ is controlled by $\theta=\{0.1,0.2,0.3,0.4,0.5,0.6,0.7,0.8,0.9\}$.

There are seven parameters for \reNCA algorithm, which are $\lambda$, $\eta$, $\kappa$, $\xi$, $\mu$, $a$ and $b$. We applied grid search to find the optimal parameters. The settings are as following. We set $\eta -\lambda \in [0.2, 5]$ with interval 0.2, $\eta+\lambda \in [1,50]$ with interval 1, $\kappa \in [0.1, 0.5]$ with interval 0.1, $\xi=\{0.1, 1\}$, $\mu=\{0.1, 1\}$ and $a=b=\max_{i,j}(\textup{abs} \ E(i,j))$. We used the same parameter setting for \reNCAs\_NP except $\lambda=0$ and $\eta\in [1,50]$ with interval 1. For \reNCAs\_$\ell_1$ algorithm, we use exactly the same parameters to the \reNCA algorithm. 

To test the potential of the competing algorithms, for certain noise level, we first applied grid search for all algorithms to find their optimal parameters on only one simulated data set based on F-measure. Then we use the optimal parameters to other 50 simulated data set under the same noise level. We compared the performance of different algorithms based on the F-measures.

\subsection{Parameter Settings for DrosDel Data}\label{smc3}
We set $\eta-\lambda \in [0.01,0.2]$ with interval 0.01, $\eta+\lambda \in [0.5,10]$ with interval 0.5, $\kappa=\{0.05, 0.1\}$, $\xi=\{0.1, 1, 5\}$, $\mu=\{0.1,1\}$ and $a=b=\max_{i,j}(\textup{abs} \ E(i,j))$. Because we do not know the ground truth regulatory networks, we proposed a heuristic score to rank our predicted networks. The score can be computed as
\begin{equation}\label{hscore}
R=\left \| E - SA \right \|_F^4\times \left \| S \right \|_0.
\end{equation}
We reasoned that the promising networks should be able to describe the underling regulatory system (making $\left \| E - SA \right \|_F$ small) as well as have only the contributing regulations (the number of edges in the network $\left \| S \right \|_0$ is small). We used power of 4 on fitting error ($\left \| E - SA \right \|_F$) because, to build a condition specific RGN, fitting the condition specific expression data $E$ is more important. The smaller $R$ implies that we can fit the expression data using the network with smaller number of edges. We ranked all predicted networks under different parameters in terms of $R$ in ascending order. We showed the performance of the top 20 networks.        

For constructing the TF-gene correlation networks, we used Pearson coefficient cutoffs ($\{0.6, 0.7, 0.8, 0.9\}$) and show the performance of the networks under different cutoffs. 

For GENIE3, there is only a parameter $K$ used by it. \cite{Huynh-Thu2010} suggests two settings for $K$, which are $K=M-1$ and $K=\sqrt{M}$. We compared the results of these two $K$s. $K=M-1$ is better than $K=\sqrt{M}$. Therefore, we use $K=M-1$ in comparison. We also need a cutoff to get the final GRN. We ranked the weighted predicted by GENIE3 and used the top 100,000, 200,000, 300,000, 400,000 and 500,000 as output, respectively. 

The co-expressed gene pairs used in the comparison shown in Fig.~\ref{flydata} is the same to the one we inputted in our formulation as the graph embedding term. The Pearson coefficient cutoff used here is 0.88.  

\subsection{The Best Network Based On Fold Enrichment}\label{smc4}
For the correlation based method, GENIE3 and \reNCA, we show their performance under different parameters in fold enrichment analysis shown Fig.~\ref{flydata}. When comparing their performance on agreement with DSX targets, we only use the networks with the best fold enrichment. We select the best networks as following. Take the networks predicted by \reNCA for example. First, we ranked all networks based on fold enrichment of GO annotations and stored in $R_{GO}$. Then we ranked all networks based on fold enrichment of PPIs and stored in $R_{PPIs}$. We then ranked the networks based on the sum of the ranking we just computed ($R_{GO}+R_{PPIs}$) and treat the top network as the best network.

\end{document}